\title{Parameterized Maximum Node-Disjoint Paths}
\titlerunning{Parameterized Maximum Node-Disjoint Paths}
\author{Michael Lampis}
{Universit\'{e} Paris-Dauphine, PSL University, CNRS UMR7243, LAMSADE, Paris, France}
{michail.lampis@dauphine.fr}
{https://orcid.org/0000-0002-5791-0887}{}
\author{Manolis Vasilakis}
{Universit\'{e} Paris-Dauphine, PSL University, CNRS UMR7243, LAMSADE, Paris, France}
{emmanouil.vasilakis@dauphine.eu}
{https://orcid.org/0000-0001-6505-2977}{}
\authorrunning{M. Lampis and M. Vasilakis}
\keywords{ETH, Maximum Node-Disjoint Paths, Parameterized Complexity, PIH}
\begin{document}

\maketitle

\begin{abstract}
We revisit the \textsc{Maximum Node-Disjoint Paths} problem,
the natural optimization version of the famous \textsc{Node-Disjoint Paths} problem, 
where we are given an undirected graph $G$, $k$ (demand) pairs of vertices $(s_i, t_i)$, and an integer $\ell$,
and are asked whether there exist at least $\ell$ vertex-disjoint paths in $G$ whose endpoints are given pairs.
This problem has been intensely studied from both the approximation and parameterized complexity point of view
and is notably known to be intractable by standard structural parameters,
such as tree-depth, as well as the combined parameter $\ell$ plus pathwidth.
We present several results improving and clarifying this state of the art, with an emphasis towards FPT approximation.

Our main positive contribution is to show that the problem's intractability can be overcome
using approximation: We show that for several of the structural parameters for which the problem is hard,
most notably tree-depth, the problem admits an \emph{efficient FPT approximation scheme}, returning a
$(1-\varepsilon)$-approximate solution in time $f(\mathrm{td},\varepsilon)n^{\mathcal{O}(1)}$.
We manage to obtain these results by comprehensively mapping out the structural parameters for which the problem
is FPT if $\ell$ is also a parameter, hence showing that understanding $\ell$ as a parameter is key to the problem's approximability. 
This, in turn, is a problem we are able to solve via a surprisingly simple color-coding algorithm,
which relies on identifying an insightful problem-specific variant of the natural parameter,
namely the number of vertices used in the solution.

The results above are quite encouraging, as they indicate that in some situations where the problem does not admit an FPT
algorithm, it is still solvable almost to optimality in FPT time.
A natural question is whether the FPT approximation algorithm we devised for tree-depth can be extended to pathwidth.
We resolve this negatively, showing that under the \emph{Parameterized Inapproximability Hypothesis} no FPT approximation
scheme for this parameter is possible, even in time $f(\mathrm{pw},\varepsilon)n^{g(\varepsilon)}$.
We thus precisely determine the parameter border where the problem transitions from
``hard but approximable'' to ``inapproximable''.

Lastly, we strengthen existing lower bounds by replacing
W[1]-hardness by XNLP-completeness for parameter pathwidth,
and improving the $n^{o(\sqrt{\mathrm{td}})}$ ETH-based lower bound
for tree-depth to (the optimal) $n^{o(\mathrm{td})}$.

\end{abstract}

\newpage

\section{Introduction}\label{sec:introduction}

One of the most important problems of structural graph theory has arguably been \NDP,
where given a graph $G$ and $k$ pairs of its vertices $(s_i, t_i)$ for $i = 1, \ldots, k$, called demands,
the goal is to determine whether there exist $k$ vertex-disjoint paths connecting $s_i$ and $t_i$.
This extensively studied problem~\cite{jct/AdlerKKLST17,iwpec/ChaudharyG0Z23,soda/Cho0O23,jct/KawarabayashiKR12,focs/KorhonenPS24,siamcomp/LokshtanovMS18,siamcomp/LokshtanovMP0Z25,scheffler1994practical,focs/WlodarczykZ23}
is one of the very first to be proven NP-complete (for $k$ being part of the input)~\cite{networks/Karp75},
and it has a central role in the field of structural graph theory as well as in parameterized complexity~\cite{birthday/Lokshtanov0Z20},
as the breakthrough result by Robertson and Seymour~\cite{jct/RobertsonS95b} that it is
fixed-parameter tractable (FPT)
parameterized by $k$ (i.e., admits an $f(k) n^{\bO(1)}$ algorithm for some function $f$, where $n = |V(G)|$)
is the culmination of their long and influential series of works on Graph Minors.

In this work we concern ourselves with \textsc{Maximum Node-Disjoint Paths} (\mNDP),
the natural generalization of {\NDP} where one asks whether at least $\ell \leq k$ demands can be routed by vertex-disjoint paths
(we say that a demand is routed when there exists a path connecting its endpoints in the set of
vertex-disjoint paths of the solution).
Notice that one could alternatively phrase this as an optimization problem and ask for the maximum number of demands that can be routed.
Even though {\mNDP} has been intensely studied with respect to its
approximability~\cite{stoc/ChuzhoyKL16,icalp/ChuzhoyKN18,toc/ChuzhoyKN21,siamcomp/ChuzhoyKN22,mp/KolliopoulosS04},
our understanding regarding its tractability under the perspective of parameterized complexity is rather limited.
Given the rich literature regarding {\NDP} and the importance of its structural parameterizations
(indeed, Scheffler's $\tw^{\bO(\tw)} n^{\bO(1)}$ algorithm~\cite{scheffler1994practical} is a key ingredient of the proof of~\cite{jct/RobertsonS95b}),
the quest to study {\mNDP} under the same point of view is strongly motivated,
with the hope of extending some of these results to it.
Alas, prior work by Ene, Mnich, Pilipczuk, and Risteski~\cite{swat/EneMPR16} shows that {\mNDP} is already \Wone-hard
when parameterized by the tree-depth of the input graph (in fact their proof implies hardness for the combined parameter vertex integrity%
\footnote{A graph has vertex integrity at most $k$ if there exists a set of at most $p$
vertices such that their deletion results in a graph with connected components of size at most $k - p$.}
plus feedback vertex number).
On the other hand, notice that {\mNDP} is trivially FPT by $k$;
one can simply reduce it to $2^k$ instances of \NDP.
A natural question arising from this observation is whether a parameterization by $\ell$ renders the problem tractable.
In this spirit, Marx and Wollan~\cite{soda/MarxW15} studied this setting and proved
that the problem is \Wone-hard even when parameterized by the combined parameter
$\ell$ plus the treewidth of the input graph;
a closer look into their proof reveals that their result extends to graphs of bounded
pathwidth plus feedback vertex number.
This plethora of negative results fails to answer which parameterizations render the problem tractable,
and whether a parameterization by $\ell$ plus some structural parameter (larger than or incomparable to treewidth)
may lift it to FPT.

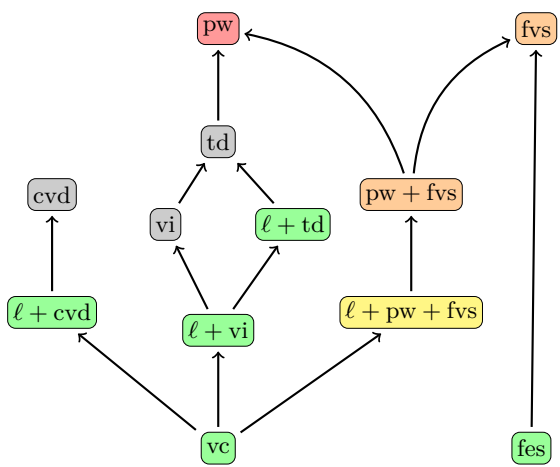
\begin{figure}[ht]
\centering
    \begin{tikzpicture}[every node/.style={thick, align=center}, scale=1]
        \small
        
        \node[fpt] (vc)   at (0,0) {$\vc$};
        \node[fpt, above = 1 of vc] (vil) {$\ell + \vi$};
        \node[fpt, above right = 1 and 0.25 of vil] (tdl) {$\ell + \td$};
        \node[fptas, above left = 1 and 0.25 of vil] (vi) {$\vi$};
        \node[fptas, above = 2 of vil] (td) {$\td$};
        \node[xnlp, above = 1 of td] (pw) {$\pw$};
        \draw[thick,<-] (vil) -- (vc);
        \draw[thick,<-] (tdl) edge [bend left] (vil);
        \draw[thick,<-] (vi) edge [bend right] (vil);
        \draw[thick,<-] (td) -- (vi);
        \draw[thick,<-] (td) -- (tdl);
        \draw[thick,<-] (pw) -- (td);
        
        \node[fpt, above left = 1 and 1 of vc] (cvdl) {$\ell + \cvd$};
        \node[fptas, above = 1 of cvdl] (cvd) {$\cvd$};
        \draw[thick,<-] (cvdl) edge [bend right] (vc);
        \draw[thick,<-] (cvd) -- (cvdl);
        
        \node[w1, above right = 1 and 1 of vc] (pwfvsl) {$\ell + \pw + \fvs$};
        \node[nph, above = 1.5 of pwfvsl] (pwfvs) {$\pw + \fvs$};
        \draw[thick,<-] (pwfvsl) edge [bend left] (vc);
        \draw[thick,<-] (pwfvs) -- (pwfvsl);
        \draw[thick,<-] (pw) edge [bend left] (pwfvs);
        
        \node[nph, above = 1 of pwfvs] (fvs) {$\fvs$};
        \node[fpt, below right = 1 and 0.5 of fvs] (fes) {$\fes$};
        \draw[thick,<-] (fvs) -- (fes);
        \draw[thick,<-] (fvs) -- (pwfvs);
    \end{tikzpicture}
    \caption{Our results and hierarchy of the related graph parameters,
        where $\vc$, $\vi$, $\td$, $\pw$, $\fvs$, $\cvd$, and $\fes$ stand for vertex cover,
        vertex integrity, tree-depth, pathwidth, feedback vertex number, cluster vertex deletion number,
        and feedback edge number respectively.
        For any graph, if the parameter at the tail of an arrow is a constant,
        that is also the case for the one at its head.
        Green and gray indicate that the problem is FPT (\cref{thm:fpt_algo,thm:fes}) and
        that it admits an FPT approximation scheme (\cref{thm:fptas_vi_cvd,thm:fptas_td}) respectively.
        Yellow indicates \Wone-hardness,
        orange that there is no FPT approximation scheme (\cref{thm:inapx_pw}),
        and red \XNLP-completeness (\cref{thm:pw_xnlp}).
        Prior to this work, it was only known that the problem is \Wone-hard parameterized
        by $\vi + \fvs$~\cite{swat/EneMPR16} and by $\ell + \pw + \fvs$~\cite{soda/MarxW15}.
    }
    \label{fig:parameters}
\end{figure}

\subparagraph*{Our Contribution.}
In the present paper we thoroughly investigate the complexity of {\mNDP} under different parameterizations,
and determine exactly when it is rendered tractable,
by additionally employing the use of approximation in the process
(see \cref{fig:parameters} for a synopsis of our results).
We start by showing that the problem is FPT parameterized by the number of vertices of an optimal solution
by developing a simple algorithm that makes use of the color-coding technique introduced by Alon, Yuster, and Zwick~\cite{jacm/AlonYZ95}.
We then prove that, albeit simple, this algorithm is in fact sufficient
to pinpoint exactly when the problem is fixed-parameter tractable:
utilizing a variety of structural observations,
we develop FPT algorithms for various parameterizations (most involving $\ell$ as a parameter)
at the core of all of which lies the previously mentioned algorithm.
Along the way we also develop an FPT algorithm for the parameterization by the feedback edge number.
These positive results, in conjunction with the hardness results of~\cite{swat/EneMPR16,soda/MarxW15},
clearly showcase the transition of the problem from FPT to \Wone-hard for its various parameterizations
(with the exception of cluster vertex deletion number, where it is unknown whether the problem is W[1]-hard).

Given the apparent hardness of the problem, we move on to consider it under the perspective of parameterized approximation (see~\cite{algorithms/FeldmannSLM20} for a survey of the area).
Here, we observe that utilizing the previously developed FPT algorithms when $\ell$ is also a parameter,
can in fact lead to efficient FPT approximation schemes (FPT-ASes) in case of solely structural parameterizations
of the problem,
when parameterized by the cluster vertex deletion number, the vertex integrity, or the tree-depth of the input graph;
in the latter two cases the problem is known to be \Wone-hard~\cite{swat/EneMPR16}.

The FPT approximation schemes developed indicate that the W[1]-hardness
can, in some cases, be overcome via the use of approximation.
Given the relationship of our studied parameters as well as the FPT-AS for tree-depth,
a natural question is whether an analogous approximation scheme exists for the parameterization by pathwidth as well.
Notice that the \Wone-hardness by $\ell$ plus the pathwidth $\pw$ and the feedback vertex number
$\fvs$ of the input graph already excludes the existence of an approximation scheme of running time
$f(\pw, \fvs, \varepsilon) n^{\bO(1)}$,
yet one of time $f(\pw, \fvs, \varepsilon) n^{g(\varepsilon)}$ remains possible.
Our next result is to exclude the existence of such a scheme
under the \emph{Parameterized Inapproximability Hypothesis}~\cite{soda/LokshtanovR0Z20},
which was recently proved to hold under the ETH~\cite{stoc/GuruswamiLRS024},
and thus precisely determine the parameter border where the problem transitions from
``hard but approximable'' to ``inapproximable''.
By slightly modifying our reduction, we subsequently show that the problem is \XNLP-complete
when parameterized solely by the pathwidth of the input graph,
where {\XNLP} is a complexity class that has been recently brought forth by Bodlaender, Groenland, Nederlof,
and Swennenhuis~\cite{iandc/BodlaenderGNS24}, and such a result implies W[$t$]-hardness for all integers $t \geq 1$.

Lastly, we proceed to a more fine-grained examination of the hardness of {\mNDP}
when parameterized by the tree-depth of the input graph.
Standard dynamic programming techniques can be used to obtain an $n^{\bO(\tw)}$ algorithm,
while previous work by Ene, Mnich, Pilipczuk, and Risteski~\cite{swat/EneMPR16} implies that
the problem cannot be solved in time $n^{o(\sqrt{\td})}$ under the ETH for graphs of tree-depth $\td$,
thereby leaving hope for an $n^{o(\td)}$ algorithm.
We revisit said proof, and by employing a recursive structure introduced by Lampis and Vasilakis~\cite{toct/LampisV24}
we bridge this gap and prove that {\mNDP} cannot be solved in time $n^{o(\td)}$ under the ETH,
rendering the $n^{\bO(\tw)}$ algorithm optimal even for this much smaller class of graphs.

\subparagraph*{Related Work.}
Even though {\mNDP} has been well-studied under the scope of approximation algorithms,
the 20 years old algorithm of ratio $\bO(\sqrt{n})$
due to Kolliopoulos and Stein~\cite{mp/KolliopoulosS04} remains the state of the art in general graphs.
This has been improved in the case of grid~\cite{approx/ChuzhoyK15} and planar graphs~\cite{stoc/ChuzhoyKL16},
resulting in approximation ratios $\tO(n^{1/4})$ and $\tO(n^{9/19})$ respectively,
where standard~$\tO$ notation is used to hide polylogarithmic terms.
For graphs of pathwidth~$\pw$, Ene, Mnich, Pilipczuk, and Risteski~\cite{swat/EneMPR16}
have presented an algorithm of approximation ratio $\bO(\pw^3)$.
Regarding inapproximalibity results, after a series of works, Chuzhoy, Kim, and Nimavat~\cite{toc/ChuzhoyKN21,siamcomp/ChuzhoyKN22}
have shown that the problem cannot be approximated in polynomial time
(i) within a factor of~$2^{\bO(\log^{1 - \varepsilon} n)}$ for any constant $\varepsilon$,
assuming $\NP \not\subseteq \text{DTIME}(n^{\mathrm{polylog} n})$,
and (ii) within a factor of~$n^{\bO(1 / (\log \log n)^2)}$,
assuming that there exists some constant $\delta > 0$ such that $\NP \not\subseteq \text{DTIME}(2^{n^\delta})$.

The problem has been also studied under a parameterized complexity perspective.
As already noted, it is trivially FPT by~$k$ by a simple reduction to~$2^k$ instances of \NDP,
which is well-known to be FPT by the number of demands.
On the other hand, Marx and Wollan~\cite{soda/MarxW15} have shown that it becomes \Wone-hard when
parameterized by~$\ell + \tw$,
with a closer look into their proof revealing that their result extends to the parameterization by~$\ell + \pw + \fvs$.
Regarding structural parameterizations,
Ene, Mnich, Pilipczuk, and Risteski~\cite{swat/EneMPR16} have proved that the problem is \Wone-hard when parameterized by
the tree-depth of the input graph;
in fact, their proof extends to graphs of bounded vertex integrity plus feedback vertex number.
Fleszar, Mnich, and Spoerhase~\cite{mp/FleszarMS18} have proposed an algorithm of running time~$(k+\fvs)^{\bO(\fvs)} n^{\bO(1)}$
as an alternative to the~$2^k \fvs^{\bO(\fvs)} n^{\bO(1)}$ algorithm obtained by reducing the instance to {\NDP} and then using Scheffler's algorithm~\cite{scheffler1994practical}.

\subparagraph*{Organization.}
In \cref{sec:preliminaries} we discuss the general preliminaries.
Subsequently, in \cref{sec:tractability} we present various tractability results,
followed by the inapproximability result for pathwidth in \cref{sec:inapx}.
Moving on, in \cref{sec:xnlp,sec:td_lb} we present the XNLP-completeness and the refined \Wone-hardness of the problem, when parameterized by the pathwidth and the tree-depth of the input graph respectively.
Lastly, in \cref{sec:conclusion} we present the conclusion as well as some directions for future research.
Proofs of statements marked with {\appsymbNote} are deferred to the appendix.

\section{Preliminaries}\label{sec:preliminaries}
For $x, y \in \Z$, let $[x, y] = \setdef{z \in \Z}{x \leq z \leq y}$,
while $[x] = [1,x]$.
For a set $S$, let $\binom{S}{c}$ denote the set of subsets of $S$ of size $c$ for some $c \in \N$,
that is $\binom{S}{c} = \setdef{S' \subseteq S}{|S'| = c}$.
Throughout the paper we use standard graph notations~\cite{Diestel17}
and assume familiarity with the basic notions of parameterized complexity~\cite{books/CyganFKLMPPS15}.
All graphs considered are undirected without loops unless explicitly mentioned otherwise.
Let $G = (V,E)$ be a graph.
The \emph{cluster vertex deletion number} of $G$, denoted~$\cvd(G)$, is the size of the smallest vertex set
whose removal results in a cluster graph, i.e., a union of cliques.
The \emph{vertex integrity} of $G$, denoted~$\vi(G)$, is the minimum integer~$k$
such that there is a vertex set $S \subseteq V$ such that $|S| + \max_{C \in \cc(G-S)} |V(C)| \le k$,
where~$\cc(G-S)$ denotes the set of connected components in $G-S$.

Given a graph $G = (V, E)$ and a partition of $V$ into $k$ independent sets $V_1, \ldots, V_k$, each of size $n$,
{\kMC} asks whether $G$ contains a $k$-clique,
and is well-known to be \Wone-hard and not to admit any $f(k) n^{o(k)}$ algorithm,
where $f$ is any computable function,
unless the ETH is false~\cite{books/CyganFKLMPPS15}.
{\MDkS} asks for the maximum number of edges that are induced by a subgraph of $G$ that
contains exactly one vertex per independent set (color).
Let $E^{i,j} \subseteq E$ denote the set of edges $e = \braces{u,v}$ where $u \in V_i$ and $v \in V_j$.
In that case, let $s_i = |\setdef{j \in [k]}{E^{i,j} \neq \varnothing}|$ and $\sigma = \sum_{i \in [k]} s_i / 2$.
Assuming that $\OPT$ denotes said maximum value, the \emph{Parameterized Inapproximability Hypothesis} (PIH)~\cite{soda/LokshtanovR0Z20} states that
there exists a constant $0 < c < 1$ such that no $f(k) n^{\bO(1)}$ algorithm can distinguish
between $\OPT = \sigma$ and $\OPT < c \cdot \sigma$.
In fact, one can assume without loss of generality that $1 \leq s_i \leq 3$
for all $i = 1, \ldots, k$~\cite[Lemma~4.4]{soda/LokshtanovR0Z20}.
This hypothesis was very recently proved to hold under the ETH~\cite{stoc/GuruswamiLRS024} (see also~\cite{stoc/GuruswamiLRS025}),
and is the analogue of the PCP theorem in the setting of parameterized complexity.

Lastly, we give a formal definition of the problem this work is concerned with.

\problemdef{Maximum Node-Disjoint Paths}
{Graph $G = (V,E)$, set of $k$ demand pairs $\M \subseteq \binom{V}{2}$, and integer $\ell \leq k$.}
{Determine whether at least $\ell$ demand pairs can be routed,
where to route a pair we need to select a path connecting it,
so that all selected paths are vertex-disjoint.}
Notice that in the above definition, even though demand pairs may indeed share a terminal,
the paths comprising a feasible solution must be vertex-disjoint,
and this constraint also applies to their endpoints.
Given an instance $\mathcal{I} = (G, \M (, \ell))$ of the optimization (or decision) version of {\mNDP}
and $G'$ a subgraph of $G$,
$\OPT(\mathcal{I}[G'])$ denotes the maximum number of demands of $\M$ that can be routed in $G'$.
We write $\OPT(\mathcal{I})$ as a shorthand for $\OPT(\mathcal{I}[G])$.

\section{FPT Algorithms and Approximation Schemes}\label{sec:tractability}

Here we present various tractability results for \mNDP.
We start by proving in \cref{subsec:exact_fpt} that the problem is FPT when parameterized by the number of vertices involved in an optimal solution;
using this as well as some structural observations,
we obtain FPT algorithms when parameterized by $\vc$, $\ell + \cvd$, $\ell + \vi$, and $\ell + \td$.
We additionally develop an FPT algorithm for the parameterization by $\fes$.
Moving on, in \cref{subsec:fpt_as} we obtain FPT approximation schemes for various structural
parameterizations of the problem by making use of the previous FPT algorithms;
for most of said parameterizations the problem is known to be \Wone-hard.

\subsection{Exact Algorithms}\label{subsec:exact_fpt}

We start with the following theorem.

\begin{theorem}\label{thm:number_of_vertices}
    Let $\mathcal{I} = (G, \M, \ell)$ be an instance of \mNDP.
    Additionally, let $\tau$ be such that there exists a family $\mathcal{P}$ of vertex-disjoint
    paths each routing a demand of $\M$,
    where $|\mathcal{P}| = \min \braces{\ell, \OPT(\mathcal{I})}$
    and $\sum_{P \in \mathcal{P}} |V(P)| \leq \tau$,
    with $|V(P)|$ denoting the number of vertices in path $P$.
    There is an algorithm that, given $\mathcal{I}$ and $\tau$,
    decides $\mathcal{I}$ in time $2^{\bO(\tau)} n^{\bO(1)}$.
\end{theorem}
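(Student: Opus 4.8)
The plan is to solve the problem via color-coding, exploiting the guarantee that some near-optimal solution touches at most $\tau$ vertices. First I would color the vertices of $G$ with $\tau$ colors through a map $c : V(G) \to [\tau]$, aiming to detect a \emph{colorful} solution, i.e.\ a family of vertex-disjoint demand-routing paths on whose union $c$ is injective. Since the guaranteed family $\mathcal{P}$ uses at most $\tau$ vertices, a uniformly random coloring is injective on $\bigcup_{P \in \mathcal{P}} V(P)$ with probability at least $\tau!/\tau^\tau \ge e^{-\tau}$; repeating $2^{\bO(\tau)}$ times yields a randomized algorithm, which I would then derandomize in the standard way by replacing the random colorings with an $(n,\tau)$-perfect hash family of size $2^{\bO(\tau)} \log n$, guaranteeing that for the vertex set of $\mathcal{P}$ some coloring in the family is injective on it.

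The core observation driving the dynamic programming is that two colorful paths using disjoint sets of colors are automatically vertex-disjoint: a shared vertex would contribute its color to both color sets. Hence I can decouple the search into (i) finding individual colorful demand paths and (ii) assembling them by a subset convolution over colors. For step (i), for each potential source $s$ I would run the classical color-coding path DP $D_s[v][S]$, set to true iff there is a colorful $s$--$v$ path using exactly the colors in $S$, via $D_s[s][\{c(s)\}] = \textnormal{true}$ and $D_s[v][S] = \bigvee_{\{u,v\} \in E} D_s[u][S \setminus \{c(v)\}]$ whenever $c(v) \in S$. Setting $\chi(S) = 1$ iff $D_s[t][S]$ holds for some demand $\{s,t\} \in \M$ records, for each color set $S$, whether a single path routing some demand can be drawn using exactly the colors $S$; this costs $2^{\bO(\tau)} n^{\bO(1)}$. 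For step (ii) I would compute $f : 2^{[\tau]} \to \N$, the maximum number of color-disjoint colorful demand paths fitting inside a given color budget, by
\[
    f(\emptyset) = 0, \qquad f(S) = \max\left( \setdef{f(S \setminus \{x\})}{x \in S} \cup \setdef{1 + f(S \setminus S')}{\emptyset \neq S' \subseteq S,\ \chi(S') = 1} \right),
\]
which runs in time $\bO(3^\tau) \cdot n^{\bO(1)}$ since we iterate over all pairs $S' \subseteq S \subseteq [\tau]$. The algorithm outputs \textsc{yes} iff $f([\tau]) \ge \ell$ for at least one coloring in the family, and the whole procedure stays within $2^{\bO(\tau)} n^{\bO(1)}$.

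For correctness I would argue both directions. Soundness: any value $f([\tau]) = r$ witnesses $r$ paths with pairwise disjoint color sets, each routing a demand; by the core observation these paths are pairwise vertex-disjoint, and two of them cannot route the same demand since they would then share an endpoint, so they route $r$ distinct demands and $r \le \OPT(\mathcal{I})$. In particular no coloring can report $f([\tau]) \ge \ell$ when $\OPT(\mathcal{I}) < \ell$. Completeness: if $\OPT(\mathcal{I}) \ge \ell$ then $\min\{\ell, \OPT(\mathcal{I})\} = \ell$, so the guaranteed family $\mathcal{P}$ consists of $\ell$ vertex-disjoint demand paths spanning at most $\tau$ vertices; under the coloring injective on their union each $P$ becomes colorful with some color set $S_P \subseteq [\tau]$, the $S_P$ are pairwise disjoint, so $\chi(S_P) = 1$ for all $P$ and $f([\tau]) \ge \ell$. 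Thus the output is correct exactly when $\OPT(\mathcal{I}) \ge \ell$.

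The hard part will be pinning down the two subtleties hidden in the reduction to a purely color-based DP: that vertex-disjointness of a solution is faithfully captured by color-disjointness of the pieces (so that shared demand terminals and accidental vertex overlaps are both excluded), and that the counted pieces correspond to distinct routed demands; both hinge on the colorful-and-color-disjoint observation above. Everything else is bookkeeping within the standard color-coding and perfect-hash-family framework, whose running-time and derandomization guarantees I would invoke directly.
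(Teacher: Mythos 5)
Your proposal is correct and follows essentially the same route as the paper: random coloring with $\tau$ colors, derandomization via a perfect hash family, and a subset DP over colors that assembles single-demand pieces using pairwise disjoint color sets (so that color-disjointness certifies vertex-disjointness). The only minor difference is that the paper computes its per-demand predicate $f(S)$ by a simple polynomial-time connectivity check in the subgraph induced by the colors of $S$ (not requiring the path itself to be colorful), whereas you run the classical colorful-path DP; both stay within $2^{\bO(\tau)} n^{\bO(1)}$ and yield the same correctness argument.
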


\begin{proof}
    Let $C = [\tau]$ be a set of $\tau$ colors.
    Randomly color the vertices of $G$ with colors from~$C$,
    and let~$A$ be the event where every one of the at most $\tau$ vertices of $\mathcal{P}$ receives a distinct color.
    Then, it follows that
    \[
        \Pr[A] \geq \frac{\tau!}{\tau^\tau} > \frac{\parens*{\frac{\tau}{e}}^\tau}{\tau^\tau} = e^{-\tau},
    \]
    therefore event $A$ holds with probability at least $e^{-\tau}$.

    Now, let $T[S]$ be equal to the maximum number of demands that can be routed by paths
    using only vertices of colors belonging to $S \subseteq C$, where each color is used in at most one path.
    Moreover, let $f(S) = 1$ if there exists at least one demand that can be routed using only vertices of colors belonging to $S$ and $0$ otherwise.
    Notice that $f$ can be computed in polynomial time.
    Then, it holds that
    \[
        T[S] = \max_{S' \subseteq S} \big\{f(S') + T[S \setminus S'] \big\},
    \]
    thus one can compute $T[C]$ in time $2^{\bO(\tau)} n^{\bO(1)}$ for a given coloring of the vertices of $G$.

    By repeating this procedure $2^{\bO(\tau)}$ times,
    with high probability there exists some iteration where event $A$ holds,
    and the total running time is $2^{\bO(\tau)} n^{\bO(1)}$.
    By using standard techniques~\cite[Section~5.6]{books/CyganFKLMPPS15},
    one can derandomize the described algorithm and obtain a deterministic one of the same running time.
\end{proof}

Using \cref{thm:number_of_vertices} we can obtain various parameterized algorithms,
by bounding the number of vertices of an optimal solution using some simple observations.

\begin{theoremrep}[\appsymb]\label{thm:fpt_algo}
    Given an instance $\mathcal{I} = (G, \M, \ell)$ of \mNDP,
    there exist algorithms that decide $\mathcal{I}$ in time
    \begin{itemize}
        \item $2^{\bO(\vc)} n^{\bO(1)}$,
        \item $2^{\bO(\cvd + \ell)} n^{\bO(1)}$,
        \item $2^{\bO(\vi^2 + \vi \cdot \ell)} n^{\bO(1)}$,
        \item $2^{\bO(2^\td \cdot \ell)} n^{\bO(1)}$,
    \end{itemize}
    where $\vc$, $\cvd$, $\vi$, and $\td$ denote the vertex cover,
    cluster vertex deletion number,
    vertex integrity, and tree-depth of $G$ respectively.
\end{theoremrep}

\begin{proof}
    We prove the statement by providing bounds on the number of vertices involved in an optimal solution,
    denoted by $\tau$,
    and then using the algorithm of \cref{thm:number_of_vertices}.
    Fix an optimal solution $\mathcal{P}$, comprised of paths $P_i$ with endpoints $s_i$ and $t_i$, for $i \in [r]$, where $0 \leq r \leq \ell$
    and $(s_i,t_i) \in \M$.
    We will denote the deletion set in each case by $S \subseteq V(G)$.
    Note that there is no need to actually compute said deletion set.

    In the case of the vertex cover, notice that at least one endpoint of every edge in $P_i \in \mathcal{P}$ belongs to $S$,
    while each vertex is involved in at most $2$ edges.
    In that case, it follows that $\tau \leq 3\vc$.

    We next consider the case of cluster vertex deletion number.
    For every path $P_i$,
    either both $s_i$ and $t_i$ belong to the same clique of $G-S$ or not.
    In the first case, there exists an optimal solution that considers the path on vertex set $\braces{s_i, t_i}$ instead.
    In the latter, every such path involves at least one vertex of $S$, for a total of at most $\cvd$ such paths.
    Moreover, if such a path involves more than $2$ vertices of the same clique of $G-S$, say $u_1, u_2, \ldots, u_q$,
    indexed by their order of appearance in the path,
    there exists an optimal solution which is obtained by taking the edge between $u_1$ and $u_q$ instead.
    Since every vertex of $S$ in such a path might be neighbors with vertices belonging to at most $2$ different cliques of $G-S$,
    it follows that after ``short-cutting'' all such paths, at most $\cvd + 4\cvd = 5\cvd$ vertices are used.
    In total, it follows that $\tau \leq 5\cvd + 2\ell$.

    We then consider the case of vertex integrity.
    For every path $P_i$,
    either $V(P_i) \cap S = \varnothing$ or $V(P_i) \cap S \neq \varnothing$.
    In the first case, it follows that such a path contains at most $\vi$ vertices.
    In the latter, $P_i$ has at least one vertex of $S$, for a total of at most $\vi$ such paths.
    Moreover, since every vertex of $S$ in $P_i$ might be neighbors with vertices belonging to at most $2$ different connected components of
    $G-S$, at most $\vi + \vi (2\vi) = 2\vi^2 + \vi$ vertices are used.
    In total, it follows that $\tau \leq 2\vi^2 + \vi + \vi \cdot \ell$.

    For tree-depth, it holds that any path in $G$ has length at most $2^\td$,
    therefore $\tau \leq 2^\td \cdot \ell$.
\end{proof}

Given the \Wone-hardness of {\mNDP} when parameterized
by the feedback vertex number implied by previous works~\cite{swat/EneMPR16,soda/MarxW15},
we move on to consider the parameterization by the feedback edge number,
and show that it renders the problem tractable.

\begin{theoremrep}[\appsymb]\label{thm:fes}
    Given an instance $\mathcal{I} = (G, \M)$ of \mNDP,
    there exists an algorithm that computes $\OPT(\mathcal{I})$ in time $3^{\fes} n^{\bO(1)}$,
    where $\fes$ denotes the feedback edge number of $G$.
\end{theoremrep}

\begin{proof}
    The algorithm will perform branching and reduce the instance to a collection of cycles,
    in which case the problem is polynomial-time solvable.
    We first present some reduction rules where Rules 2 and 3 are only applied after exhaustively applying Rule 1.

    \proofsubparagraph*{Rule 1.}
    Let $\mathcal{I} = (G, \M)$ be an instance of \mNDP,
    and $u \in V(G)$ such that $(u,u) \in \M$.
    Then, replace $\mathcal{I}$ with $\mathcal{I}' = (G - u, \M')$,
    where $\M' \subseteq \M$ contains the demands of $\M$ whose endpoints both differ from $u$.
    It holds that $\OPT(\mathcal{I}) = \OPT(\mathcal{I}') + 1$.

    \proofsubparagraph*{Rule 2.}
    Let $\mathcal{I} = (G, \M)$ be an instance of \mNDP,
    and $u \in V(G)$ such that $\deg_G(u) = 0$.
    Then, replace $\mathcal{I}$ with $\mathcal{I}' = (G - u, \M')$,
    where $\M' \subseteq \M$ contains the demands of $\M$ whose endpoints both differ from $u$.
    It holds that $\OPT(\mathcal{I}) = \OPT(\mathcal{I}')$.

    \proofsubparagraph*{Rule 3.}
    Let $\mathcal{I} = (G, \M)$ be an instance of \mNDP,
    and $u \in V(G)$ such that $\deg_G(u) = 1$,
    where $v$ denotes its single neighbor, i.e., $N_G(u) = \braces{v}$.
    Then, replace $\mathcal{I}$ with $\mathcal{I}' = (G - u, \M')$,
    where $\M'$ is obtained by replacing any demand $(u,w)$ in $\M$ with $(v,w)$.
    It holds that $\OPT(\mathcal{I}) = \OPT(\mathcal{I}')$.

    It is easy to see that applying these rules in their respective order is safe,
    and let $\mathcal{I} = (G, \M)$ denote the instance obtained after exhaustively doing so.
    Assume without loss of generality that $G$ is connected, otherwise solve each connected component independently.
    Notice that all vertices of $G$ have degree at least 2.
    Recall that for $G'$ subgraph of $G$ it holds that $\fes(G') \leq \fes(G)$,
    and since $G$ is a subgraph of the initial graph $\fes(G) \leq \fes$ follows.
    Moreover, for a connected graph $G$, it holds that $\fes(G) = |E(G)| - |V(G)| + 1$.

    Let $\mathcal{P}$ be a maximum-cardinality set of vertex-disjoint paths routing demands of $\M$ in $G$.
    Let $v \in V(G)$ such that $\deg_G(v) \geq 3$, where $e_1, e_2, e_3 \in E(G)$ denote three edges incident with $v$.
    Notice that at least one among those edges, say $e_1$, does not take part in $\mathcal{P}$,
    that is, $e_1 \notin E(P)$ for all paths $P \in \mathcal{P}$.
    Perform branching on all 3 cases and delete the corresponding edge from the graph of the produced instance;
    we claim that this reduces the feedback edge number of all connected components of the resulting graph.
    To see this, consider two cases: for $j \in [3]$, either $G - e_j$ is connected or not;
    notice that in the latter case we can solve for each connected component independently.
    If $G - e_j$ is connected, then it holds that $\fes(G - e_j) = \fes(G) - 1$.
    Otherwise, $G - e_j$ is comprised of two connected components $C_1$ and $C_2$,
    in which case $\fes(C_1) + \fes(C_2) = (|E(C_1)| + |E(C_2)|) - (|V(C_1)| + |V(C_2)|) + 2 =
    (|E(G)| - 1) - |V(G)| + 2 = \fes(G)$.
    We next prove that $\fes(C_i) > 0$ for $i \in [2]$.
    Assume that this is not the case, and let without loss of generality $\fes(C_1) = 0$,
    in which case $C_1$ is a tree.
    Moreover, let $w$ denote the vertex that is incident with $e_j$ and belongs to $C_1$.
    Since no reduction rule can be further applied in $G$,
    it follows that $\deg_{G[C_1]} (u) > 2$ for all vertices $u \neq w$ belonging to $C_1$.
    In that case, $G[C_1]$ has at most one leaf, consequently it is a singleton.
    Then however, $\deg_G (w) = 1$ and the reduction rules could have been further applied in $G$, a contradiction.
    Notice that this additionally implies that if $\fes(G) = 1$,
    then $G - e_j$ is connected.
    From the previous discussion, it follows that
    \begin{equation}\label{eq:fes}
        T(z) \leq 3 \max_{\substack{z_1, z_2 > 0 \\ z_1+z_2=z}} \big\{ T(z_1)+T(z_2) \big\},
    \end{equation}
    where $T(z)$ denotes the number of connected components resulting from
    our algorithm for $z \in [\fes]$ being the feedback edge number of the connected input graph,
    with each component having maximum degree $2$.

    We first argue that when the graph has maximum degree $2$, then the problem is polynomial-time solvable.
    Let $\mathcal{I}^\star = (G^\star, \M^\star)$ denote one such instance,
    where every vertex of the connected graph $G^\star$ is of degree exactly 2,
    as otherwise the reduction rules can still be applied.
    Consequently, $G^\star$ is a cycle, thus it suffices to guess the endpoints of one routed demand
    and delete the involved vertices
    (since $G^\star$ is a cycle there are only two different such paths),
    thus reducing $G^\star$ into a path where the reduction rules can be applied exhaustively.
    Therefore, one can compute $\OPT(\mathcal{I}^\star)$ in polynomial time.
    
    Lastly, we prove by induction that $T(z) \leq 3 (3^z - 9/5)$ for all $z \in [\fes]$.
    For the base case, $T(1) \leq 3$ holds.
    Assume that the statement holds for all $z' \in [z-1]$.
    Then, by \cref{eq:fes} it holds that
    $T(z) \leq 9 (3^{z_1} + 3^{z_2} - 18/5)
    \leq 9 (3^{z_1+z_2-1} + 3^{1} - 18/5)
    = 9 (3^{z-1} - 3/5)
    = 3 (3^z - 9/5)$,
    where the first inequality is due to the induction hypothesis
    whereas the second due to the function $3^x$ being convex.
\end{proof}

\subsection{Approximation Schemes}\label{subsec:fpt_as}

Using the FPT algorithms of \cref{thm:fpt_algo},
we develop FPT approximation schemes for {\mNDP} when parameterized solely by structural parameters.

\begin{theorem}\label{thm:fptas_vi_cvd}
    Given an instance $\mathcal{I} = (G, \M)$ of \mNDP,
    one can $(1-\varepsilon)$-approximate $\OPT(\mathcal{I})$ in time
    $2^{\bO(\cvd / \varepsilon)} n^{\bO(1)}$ and $2^{\bO(\vi^2 / \varepsilon)} n^{\bO(1)}$,
    where $\cvd$ and $\vi$ denote the cluster vertex deletion number and
    vertex integrity of $G$ respectively.
\end{theorem}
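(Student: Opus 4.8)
The plan is to strip the dependence on $\ell$ out of the exact algorithms of \cref{thm:fpt_algo} by exploiting the same decomposition of a solution that was used to prove them: a near-optimal routing contains only a bounded number of ``expensive'' demands, while the ``cheap'' ones can be handled optimally in isolation. Fix the relevant deletion set $S \subseteq V(G)$, so that $G - S$ is a disjoint union of cliques (for $\cvd$) or of components each of size at most $\vi$ (for $\vi$). In any solution I call a routed demand \emph{long} if its path intersects $S$ and \emph{short} otherwise; since the paths are vertex-disjoint there are at most $|S| \le \cvd$ (resp.\ at most $\vi$) long demands, and every short demand is routed inside a single connected component of $G - S$, hence has both endpoints in the same component.

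First I would compute $\OPT_{\mathrm{short}}$, the maximum number of demands routable by vertex-disjoint paths avoiding $S$ entirely. In the $\cvd$ case every component is a clique, so each internal demand can be routed by a single edge and $\OPT_{\mathrm{short}}$ is simply the sum over cliques of a maximum matching in the graph whose edges are the demands internal to that clique; this is polynomial-time computable. In the $\vi$ case the components are vertex-disjoint and each has at most $\vi$ vertices, so $\OPT_{\mathrm{short}}$ is the sum over components of the optimum of the induced sub-instance, each solvable in time $2^{\bO(\vi)} n^{\bO(1)}$ by \cref{thm:number_of_vertices} with $\tau \le \vi$. The structural heart of the argument is then the sandwich
\[
    \OPT_{\mathrm{short}} \le \OPT(\mathcal{I}) \le \OPT_{\mathrm{short}} + \cvd \quad (\text{resp.}\ +\, \vi),
\]
where the lower bound holds because the short routings already form a feasible solution, and the upper bound holds because the short demands of an optimal solution are themselves a feasible solution to the isolated problem, while its long demands number at most $\cvd$ (resp.\ $\vi$).

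With the sandwich in hand I would set the threshold $\ell^\star = \lceil \cvd / \varepsilon \rceil$ (resp.\ $\lceil \vi / \varepsilon \rceil$) and run the exact algorithm of \cref{thm:fpt_algo} with $\ell = \ell^\star$, obtaining $\min\{\ell^\star, \OPT(\mathcal{I})\}$ in time $2^{\bO(\cvd + \ell^\star)} n^{\bO(1)} = 2^{\bO(\cvd / \varepsilon)} n^{\bO(1)}$ (resp.\ $2^{\bO(\vi^2 + \vi \cdot \ell^\star)} n^{\bO(1)} = 2^{\bO(\vi^2 / \varepsilon)} n^{\bO(1)}$, using $\cvd \le \cvd/\varepsilon$ for $\varepsilon \le 1$). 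The output is the larger of $\OPT_{\mathrm{short}}$ and this value. If $\OPT(\mathcal{I}) \le \ell^\star$ the exact algorithm already returns $\OPT(\mathcal{I})$ exactly; otherwise $\OPT(\mathcal{I}) > \ell^\star \ge \cvd / \varepsilon$, whence the upper bound gives $\OPT_{\mathrm{short}} \ge \OPT(\mathcal{I}) - \cvd \ge (1 - \varepsilon) \OPT(\mathcal{I})$, so the already-computed $\OPT_{\mathrm{short}}$ is a valid $(1-\varepsilon)$-approximation. Either way the returned value is at least $(1-\varepsilon)\OPT(\mathcal{I})$.

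I expect the main obstacle to be establishing the sandwich inequality cleanly: one must argue that the short demands of an optimal solution can be charged against $\OPT_{\mathrm{short}}$ without double counting, which relies on the fact that distinct components of $G-S$ are vertex-disjoint and on the exchange (``short-cutting'') arguments already developed in the proof of \cref{thm:fpt_algo}. The only remaining point requiring care is verifying that the computation of $\OPT_{\mathrm{short}}$ itself respects the claimed running times, which for vertex integrity depends precisely on each component being small enough that \cref{thm:number_of_vertices} applies with $\tau \le \vi$.
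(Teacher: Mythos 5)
Your proposal is correct and follows essentially the same route as the paper: compute the deletion set $S$, observe that $\OPT(\mathcal{I}[G-S]) \geq \OPT(\mathcal{I}) - |S|$ (your sandwich inequality), solve the restricted instance exactly (matching for cluster vertex deletion, \cref{thm:number_of_vertices} per component for vertex integrity), and fall back on the exact algorithm of \cref{thm:fpt_algo} with $\ell = \bO(|S|/\varepsilon)$ when $\OPT(\mathcal{I}) < |S|/\varepsilon$. The "double counting" concern you flag is a non-issue, since the upper bound follows immediately from the fact that at most $|S|$ vertex-disjoint paths can intersect $S$ while the remaining paths form a feasible solution in $G-S$.
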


\begin{proof}
    Let $S \subseteq V(G)$ denote the deletion set,
    which can be computed in time $2^{\bO(\cvd)} n^{\bO(1)}$ for $\cvd$~\cite{mst/Tsur21}
    and $2^{\bO(\vi \log \vi)} n^{\bO(1)}$ for $\vi$~\cite{algorithmica/DrangeDH16}.
    Notice that $\OPT(\mathcal{I}[G-S]) \geq \OPT(\mathcal{I}) - |S|$,
    since every vertex of $S$ can be used to route at most one demand.
    Consider the case where $\OPT(\mathcal{I}) - |S| \geq (1-\varepsilon) \OPT(\mathcal{I}) \iff \OPT(\mathcal{I}) \geq |S| / \varepsilon$.
    Then, $\OPT(\mathcal{I}[G-S]) \geq (1-\varepsilon) \OPT(\mathcal{I})$.
    Alternatively, it holds that $\OPT(\mathcal{I}) < |S| / \varepsilon$ and the algorithm of \cref{thm:fpt_algo} can be used,
    with $\ell = \bO(|S| / \varepsilon)$.

    It remains to compute $\OPT(\mathcal{I}[G-S])$.
    In the case of the cluster vertex deletion set, $G-S$ is a collection of cliques.
    In that case, one can compute $\OPT(\mathcal{I}[G-S])$ in polynomial time
    by reducing to an instance of \textsc{Maximum Matching},
    on the same vertex set and on edge set equal to the pairs of $\M$ where both endpoints belong to the same clique.
    In the case of vertex integrity, it holds that $\OPT(\mathcal{I}[G-S])$ can be computed in FPT time,
    since it is comprised of $\bO(n)$ instances of size at most $\vi$, each of which is solvable in time $2^{\bO(\vi)} n^{\bO(1)}$ due to \cref{thm:number_of_vertices}.
\end{proof}

\begin{theorem}\label{thm:fptas_td}
    Given an instance $\mathcal{I} = (G, \M)$ of \mNDP,
    one can $(1-\varepsilon)$-approximate $\OPT(\mathcal{I})$ in time
    $2^{\bO(2^\td / \varepsilon)} n^{\bO(1)}$,
    where $\td$ denotes the tree-depth of $G$.
\end{theorem}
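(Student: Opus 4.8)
The plan is to follow the win/win strategy of \cref{thm:fptas_vi_cvd}: if $\OPT(\mathcal{I})$ is small we solve exactly with the algorithm of \cref{thm:fpt_algo}, and if it is large we discard a controlled part of the graph so that what remains splits into independent, low-cost pieces. The obstacle compared to vertex integrity or cluster vertex deletion is that bounded tree-depth does \emph{not} provide a deletion set of size $\bO(\td)$ whose removal trivialises the instance; instead I would exploit the recursive structure of an elimination forest $F$ of depth at most $\td$ (computable in FPT time), together with the fact already used in \cref{thm:fpt_algo} that every path of $G$ has at most $2^\td$ vertices.

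Concretely, fix a threshold $B=\bO(1/\varepsilon)$. First I would decompose the instance: while some connected component $C$ of the current graph satisfies $\OPT(\mathcal{I}[C])\ge B$ — a test costing $2^{\bO(2^\td/\varepsilon)}n^{\bO(1)}$ by running \cref{thm:fpt_algo} with $\ell=B$ — remove the root $r_C$ of $C$ in $F$, which in $G$ splits $C$ exactly into the subtrees hanging below $r_C$ (since the children-subtrees of a root carry no edges between them). Let $R$ be the set of removed roots. By construction every connected component of $G-R$ routes fewer than $B$ demands, and since each such component still has tree-depth at most $\td$, the algorithm of \cref{thm:number_of_vertices} solves it exactly in time $2^{\bO(2^\td\cdot B)}n^{\bO(1)}=2^{\bO(2^\td/\varepsilon)}n^{\bO(1)}$; as the components are vertex-disjoint their optima simply add. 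Finally, because deleting one vertex destroys at most one path of any fixed solution, $\OPT(\mathcal{I}[G-R])\ge\OPT(\mathcal{I})-|R|$, so it would suffice to argue $|R|\le\varepsilon\,\OPT(\mathcal{I})$, and to add the base case $\OPT(\mathcal{I})<B$ handled directly by \cref{thm:fpt_algo}.

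Bounding $|R|$ is the heart — and the main difficulty — of the proof. The charging I would aim for is that each removed root ``finalises'' a vertex-disjoint block of the graph carrying $\Omega(B)$ routable demands; as these blocks are disjoint their demands sum to at most $\OPT(\mathcal{I})$, which would give $|R|=\bO(\OPT(\mathcal{I})/B)=\bO(\varepsilon\,\OPT(\mathcal{I}))$. The delicate point, and precisely where tree-depth (rather than vertex integrity) must be handled with care, is that a naive top-down removal can waste removals along a long \emph{chain} of nested heavy components of $F$: one may strip off up to $\td$ roots — or, dually, cut $\td$ demands that cross a separator — while reducing the optimum by only a constant, which would reintroduce an extra $\td$ factor in the exponent and give merely $2^{\bO(2^\td\cdot\td/\varepsilon)}n^{\bO(1)}$. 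I therefore expect the crux to be selecting \emph{which} vertices to cut so that every cut is charged to a genuinely new disjoint block of demands rather than repeatedly to the same block (for instance by cutting as deep as possible in $F$ and separately accounting for the demands straddling a cut), and then verifying that this keeps $|R|\le\varepsilon\,\OPT(\mathcal{I})$ while each resulting piece still has optimum below $B=\bO(1/\varepsilon)$. Once this counting is in place, combining the exact per-piece subroutine of \cref{thm:number_of_vertices} with the small-$\OPT$ base case yields the desired $(1-\varepsilon)$-approximation within the claimed running time.
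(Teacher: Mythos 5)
You have correctly set up the win/win framework (small $\OPT$ solved exactly via \cref{thm:fpt_algo}, large $\OPT$ handled by deleting vertices and recursing on components of smaller tree-depth), but the proof has a genuine gap exactly where you say it does: the bound $|R|\le\varepsilon\,\OPT(\mathcal{I})$ is never established, and your own discussion of nested heavy components shows why the natural charging fails. The removed components form a laminar family, so only the \emph{minimal} ones are disjoint; a chain of up to $\td$ nested components can all be ``heavy'' while sharing the same $\Omega(B)$ demands, so the best you can extract from a per-level argument is $|R|\le \td\cdot\OPT(\mathcal{I})/B$. Taking $B=\bO(\td/\varepsilon)$ then salvages correctness but yields running time $2^{\bO(2^{\td}\cdot\td/\varepsilon)}n^{\bO(1)}$, which is weaker than the claimed $2^{\bO(2^{\td}/\varepsilon)}$; and the refinement you gesture at (``cut as deep as possible and charge each cut to a new disjoint block'') is precisely the missing idea, not a routine verification. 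As written, the central counting step is a plan rather than a proof.

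The paper avoids global charging altogether. It removes the root $r$ of the (connected) graph and observes $\OPT(\mathcal{I}[G-r])\ge\OPT(\mathcal{I})-1$; it then splits the error budget \emph{multiplicatively and per recursion level}: with $\varepsilon'=\varepsilon/2$, either $\OPT(\mathcal{I})<\frac{2-\varepsilon}{\varepsilon}=\bO(1/\varepsilon)$ (solve exactly via \cref{thm:fpt_algo}), or a $(1-\varepsilon')$-approximation of $\OPT(\mathcal{I}[G-r])$ is already a $(1-\varepsilon)$-approximation of $\OPT(\mathcal{I})$, because the additive loss of $1$ is absorbed by the factor $\frac{1-\varepsilon}{1-\varepsilon'}$ when $\OPT$ is large. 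Recursing with error $\varepsilon/2^{i}$ at depth $i$ makes the losses along any root-to-leaf chain telescope to $(1-\varepsilon)$ overall, so no bound on the total number of removed roots is ever needed. The price is that the ``small $\OPT$'' threshold grows to $\bO(2^{i}/\varepsilon)$ at depth $i$, but this is exactly offset by the tree-depth shrinking to $\td-i$, so each exact call costs $2^{\bO(2^{\td-i}\cdot 2^{i}/\varepsilon)}=2^{\bO(2^{\td}/\varepsilon)}$ and the claimed running time follows. If you want to complete your write-up, adopting this geometric error schedule (or proving the sharper charging you sketch) is the missing ingredient.
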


\begin{proof}
    Since $G$ has tree-depth $\td$,
    one can compute an \emph{elimination forest} of $G$ in time $2^{\bO(\td^2)} n^{\bO(1)}$
    due to~\cite{esa/NadaraPS22,icalp/ReidlRVS14}.
    This is a rooted forest on the same vertex set where every pair of vertices adjacent in $G$
    adheres to the ancestor/descendant relation.
    Assume that~$G$ is connected, otherwise solve for each connected component.
    Let $r$ denote the root of the elimination tree,
    in which case it follows that every connected component of $G-r$ has tree-depth at most $\td-1$.
    Notice that it holds that $\OPT(\mathcal{I}[G-r]) \geq \OPT(\mathcal{I}) - 1$,
    since $r$ can be used to route at most one demand.
    Let $\varepsilon' < \varepsilon$ and consider the case where
    \[
        \OPT(\mathcal{I}) - 1 \geq \frac{1-\varepsilon}{1-\varepsilon'} \cdot \OPT(\mathcal{I}) \iff
        \OPT(\mathcal{I}) \geq \frac{1-\varepsilon'}{\varepsilon-\varepsilon'},
    \]
    thus setting $\varepsilon' = \varepsilon / 2$ implies that
    $\OPT(\mathcal{I}) \geq \frac{2-\varepsilon}{\varepsilon}$.
    In this case, it holds that an $(1-\varepsilon')$-approximation of $\OPT(\mathcal{I}[G-r])$ is an $(1-\varepsilon)$-approximation of $\OPT(\mathcal{I})$.
    On the other hand, if $\OPT(\mathcal{I}) < \frac{2-\varepsilon}{\varepsilon} = \bO(1 / \varepsilon)$,
    we can use the algorithm of \cref{thm:fpt_algo}, running in time $2^{\bO(2^\td / \varepsilon)} n^{\bO(1)}$.

    Consequently, one can recursively argue about the existence of an approximation scheme,
    since in the case where a graph has tree-depth equal to $1$,
    the instance is polynomial-time solvable.
    We proceed with bounding the scheme's running time.
    Let $T(\td,\varepsilon)$ denote the running time for graph $G$ with tree-depth $\td$ and error $\varepsilon$,
    while $n'$ denotes the number of connected components of $G-r$.
    Notice that it holds that
    \[        
        T(\td,\varepsilon) \leq \max\braces*{2^{\bO(2^\td / \varepsilon)} n^{\bO(1)}, n' \cdot T(\td-1, \varepsilon/2)}
        \leq 2^{\bO(2^\td / \varepsilon)} n^{\bO(1)} + n' \cdot T(\td-1, \varepsilon/2),
    \]
    while the number of the nodes of the same height in the recursion tree is at most $n$,
    since each node corresponds to a connected component.
    Consequently, it holds that
    \[
        T(\td,\varepsilon) \leq 2^{\bO(2^\td / \varepsilon)} n^{\bO(1)} +
            n \cdot \sum_{i=1}^{\td} 2^{\bO(2^{\td-i} \cdot \frac{1}{\varepsilon / 2^i})} n^{\bO(1)} =
        2^{\bO(2^\td / \varepsilon)} n^{\bO(1)}. \qedhere
    \]
\end{proof}

\section{Inapproximability}\label{sec:inapx}

Given the FPT approximation scheme of \cref{thm:fptas_td},
a natural question arising is whether such an approximation scheme exists for the parameterization by pathwidth as well.
Due to the \Wone-hardness of {\mNDP} parameterized by $\ell + \pw + \fvs$~\cite{soda/MarxW15},
there can be no $(1-\varepsilon)$-approximation scheme running in time $f(\pw,\fvs,\varepsilon) n^{\bO(1)}$,
yet one of running time $f(\pw,\fvs,\varepsilon) n^{g(\varepsilon)}$ might be possible.%
\footnote{Assuming there existed such an algorithm of running time $f(\pw,\fvs,\varepsilon) n^{\bO(1)}$,
setting $\varepsilon < 1 / \OPT(\mathcal{I})$
results in obtaining a solution of value at least $(1-\varepsilon) \OPT(\mathcal{I}) > \OPT(\mathcal{I}) - 1$,
i.e., optimal,
in time $f(\pw, \fvs, \OPT(\mathcal{I})) n^{\bO(1)}$.}
In this section we answer this question in the negative and prove that there exists some constant
$0 < c' < 1$ such that {\mNDP} cannot be approximated within a factor of $c'$ in time $f(\pw,\fvs) n^{\bO(1)}$,
for any function $f$, unless the Parameterized Inapproximability Hypothesis~\cite{soda/LokshtanovR0Z20} fails.

\begin{theorem}\label{thm:inapx_pw}
    Assuming the PIH, there exists a constant $c' > 0$ such that
    {\mNDP} does not admit a $c'$-approximation algorithm of running time $f(\pw,\fvs) n^{\bO(1)}$.
\end{theorem}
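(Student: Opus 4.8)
The plan is to give a gap-preserving reduction from the gap version of \MDkS, which is hard under the PIH, to \mNDP, keeping both the pathwidth and the feedback vertex number of the produced instance bounded by a function of the number of colors~$k$. Starting from an \MDkS instance with color classes $V_1,\dots,V_k$ and the normalization $1 \le s_i \le 3$ (so that $\sigma=\sum_i s_i/2$ satisfies $k/2 \le \sigma \le 3k/2$), the PIH provides a constant $0<c<1$ such that no FPT algorithm distinguishes $\OPT=\sigma$ from $\OPT<c\sigma$. I would design the reduction so that the optimum of the produced \mNDP instance equals $B+\OPT$, where $B=\Theta(k)$ is a fixed number of \emph{structural} demands routed by every reasonable solution and $\OPT$ is the \MDkS value of the selection encoded by the solution. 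Since $B$ and $\sigma$ are both $\Theta(k)$ with a bounded ratio, a $c'$-approximation with $c'=(B+c\sigma)/(B+\sigma)$ would, on a yes-instance, return value at least $c'(B+\sigma)=B+c\sigma$, strictly above every feasible value on a no-instance, contradicting the PIH; as $(B+c\sigma)/(B+\sigma)$ is a constant in $(0,1)$, this is exactly the sought inapproximability.

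For the construction I would introduce, for each color class $V_i$, a single \emph{selection demand} whose routing forces the choice of one vertex $v\in V_i$. I realize it as a long path offering exactly $n$ optional detours, so that a routed path takes precisely one detour---thereby committing to a vertex---and so that at any left-to-right cut only $\bO(1)$ of its vertices are active. Because each color is handled by a \emph{single} path, consistency of the selection across the at most three gadgets in which color~$i$ participates is automatic. For every relevant color pair $(i,j)$ I would place, at suitable positions along a common spine, an \emph{edge gadget} assembled from the edges of $E^{i,j}$ together with one \emph{reward demand}, designed so that the reward demand can be routed disjointly from the two selection paths $S_i,S_j$ if and only if the vertices they selected form an edge of $E^{i,j}$. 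Each satisfied color pair then contributes exactly one extra routed demand, yielding the intended identity (here $B=k$ counts the selection demands).

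The two directions of correctness are then comparatively routine. In the yes-case, a common optimal selection routes all $k$ selection demands and all $\sigma$ reward demands, giving value $B+\sigma$. For the reverse inequality one argues that any solution can be normalized, by an exchange argument, into one that routes every selection demand along a clean single detour and routes a reward demand only for genuinely satisfied pairs; this shows the optimum never exceeds $B+\OPT$, so in the no-case it stays below $B+c\sigma$. Combined with the gap of the \MDkS instance this establishes the required separation.

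The main obstacle, and the real content of the construction, is to bound $\pw$ and $\fvs$ by a function of $k$ \emph{independently of $n$} while still encoding an $n$-way choice per color class. A naive grid-style encoding of such a choice forces a vertical cut of a path decomposition to cross $\Theta(n)$ edges and is therefore useless; the ``long path with optional detours'' encoding above is precisely what avoids this, since each selection path meets any cut in $\bO(1)$ vertices. Laying the $k$ selection paths in parallel along one spine and interleaving the $\bO(k)$ edge gadgets in a linear order then gives $\pw=\bO(k)$, provided each (possibly large) edge gadget itself admits a width-$\bO(1)$ path decomposition, which requires processing its edges one at a time rather than in parallel. For the feedback vertex number one designates the $\bO(k)$ spine and connector vertices as a deletion set whose removal turns every detour structure and every edge gadget into a disjoint union of paths and small trees, so that $\fvs=\bO(k)$. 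Checking that a single edge gadget simultaneously enforces the edge condition, has $\bO(1)$ pathwidth, and contributes only $\bO(1)$ to this deletion set is the delicate point that the remainder of the section must verify.
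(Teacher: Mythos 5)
Your overall architecture matches the paper's: a gap-preserving reduction from the PIH-hard gap version of \MDkS, one selection structure per color class encoding an $n$-way choice with bags of size $\bO(1)$, one adjacency/reward demand per relevant color pair, and an additive term $B=\Theta(k)$ of structural demands that is comparable to $\sigma$ so that the multiplicative gap survives. However, there is a genuine gap at the heart of your selection gadget. You encode the choice of a vertex of $V_i$ by a \emph{single} demand routed along ``a long path offering exactly $n$ optional detours, so that a routed path takes precisely one detour,'' and you conclude that consistency of the selection is ``automatic.'' Nothing in the vertex-disjointness constraint forces a single $s$--$t$ path to use exactly one detour (equivalently, to leave exactly one block of reward-endpoints free): the path can skip many detours, leaving several blocks unused, and then collect reward demands whose endpoints correspond to \emph{different} vertices of $V_i$ for different color pairs. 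This breaks the soundness direction precisely where it matters, since the no-case must rule out inconsistent selections. The paper's construction enforces uniqueness combinatorially: each choice gadget is a cycle on $2N$ vertices carrying \emph{two} complementary demands whose shortest routes together use all but one block of $3+3$ vertices, so a counting argument (\cref{claim:inapx_pw_helper}) shows that at most two demands fit and that routing both leaves exactly one block $\hat{P}^i_p\cup\hat{R}^i_p$ free. Your gadget needs an analogous rigidity mechanism; as described it has none.

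A second, related imprecision is your claim that the optimum of the produced instance \emph{equals} $B+\OPT$ and ``never exceeds $B+\OPT$'' after an exchange argument. In constructions of this type a solution may deliberately sacrifice structural demands in a selection gadget to free up several reward endpoints attached to different blocks, so the exact identity generally fails. The paper does not prove it either: it defines the loss $L_{\mathcal{P}}=\ell-|\mathcal{P}|$ and shows that renormalizing every deficient choice gadget at most \emph{doubles} the loss (each gadget has at most $3$ incident adjacency vertices, and one can always keep one of them usable), which degrades the gap by a factor of $2$ but still yields the constant $c'=\frac{9+c}{10}$ via $\frac{\sigma}{2\ell}\ge\frac{1}{10}$. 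Your target constant $c'=(B+c\sigma)/(B+\sigma)$ is what you would get if the exact identity held; to make the argument go through you must instead quantify how many reward demands a single sacrificed structural demand can buy and absorb that constant into $c'$, exactly as the loss-accounting in the paper does. The pathwidth and feedback vertex bounds in your sketch are fine and essentially coincide with the paper's ($H-F$ is a union of $k$ bounded-width selection structures and $|F|=\sigma=\bO(k)$).
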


\begin{proof}
    Let $\mathcal{I} = (G,k)$ be an instance of \MDkS,
    and recall that we assume that $G$ is given to us partitioned into $k$ independent sets $V_1, \ldots, V_k$,
    where $V_i = \braces{v^i_1, \ldots, v^i_n}$.
    Moreover, let $E^{i_1,i_2} \subseteq E(G)$ denote the edges of $G$ with one endpoint in $V_{i_1}$ and the other in $V_{i_2}$.
    For every color class $i \in [k]$, let $s_i = |\setdef{j \in [k]}{E^{i,j} \neq \varnothing}|$,
    and assume without loss of generality that $1 \leq s_i \leq 3$.
    Set $\sigma = \sum_{i \in [k]} s_i / 2$,
    and notice that $\frac{k}{2} \leq \sigma \leq \frac{3k}{2}$.
    Let $\OPT(\mathcal{I}) \leq \sigma$ denote the optimal value of instance $\mathcal{I}$,
    i.e., the maximum number of edges among the induced
    subgraphs of $G$ that contain one vertex per color class~$V_i$.

    We will construct in polynomial time an instance $\mathcal{J} = (H, \M)$ of \mNDP,
    where $\pw(H) = \bO(k)$, $\fvs(H) = \bO(k)$, and $|V(H)| = n^{\bO(1)}$,
    while $\M \subseteq \binom{V(H)}{2}$ is a set of demands.
    Moreover, it will hold that $\OPT(\mathcal{J}) \leq \ell$,
    where $\OPT(\mathcal{J})$ denotes the optimal value of instance $\mathcal{J}$,
    and $\ell = 2k + \sigma$.
    We will present a reduction such that
    (i) if $\OPT(\mathcal{I}) = \sigma$, then $\OPT(\mathcal{J}) = \ell$,
    and (ii) if $\OPT(\mathcal{I}) < c \cdot \sigma$, then $\OPT(\mathcal{J}) < c' \cdot \ell$,
    for constants $c$ and $c'$ where
    $0 < c < 1$ and $c' = \frac{9+c}{10}$.

    \proofsubparagraph*{Choice Gadget.}
    For an independent set $V_i$,
    we construct the \emph{choice gadget} $\hat{C}_{i}$ in the following way.
    First, for $p \in [n]$, we construct paths on vertex sets $\hat{P}^{i}_p = \setdef{w^{i,p}_q}{q \in [3]}$,
    as well as paths $\hat{R}^{i}_p$ on $3$ unnamed vertices each.
    Then, we introduce vertices $v^{i}_p$ and $u^{i}_p$, for $p \in [n+1]$.
    Next, we add edges $\braces{u^{i}_{n+1}, v^{i}_1}$ and $\braces{v^{i}_{n+1}, u^{i}_1}$.
    Lastly, for $p \in [n]$, we add edges $\braces{v^{i}_p, w^{i,p}_1}$ and $\braces{w^{i,p}_{3}, v^{i}_{p+1}}$,
    as well as an edge from $u^{i}_p$ to one endpoint of $\hat{R}^{i}_p$, and an edge from $u^{i}_{p+1}$ to the other endpoint of $\hat{R}^{i}_p$.
    See \cref{fig:inapx_choice_gadget} for an illustration.
    Subsequently, add to $\M$ all pairs $(v^{i}_p, u^{i}_{p+1})$ and $(v^{i}_p, u^{i}_{p-1})$
    for $p \in [2, n]$,
    as well as the pairs $(v^{i}_1, u^{i}_2)$ and $(v^{i}_{n+1}, u^{i}_n)$.
    Let $\M_c \subseteq \M$ denote all such pairs added to $\M$ in this step of the construction.
    Intuitively, we will consider a one-to-one mapping between the vertex $v^i_p$ of $V_i$ being chosen and
    the vertices of $\hat{P}^{i}_p$ not being used to route any of the demands in $\M_c$.

    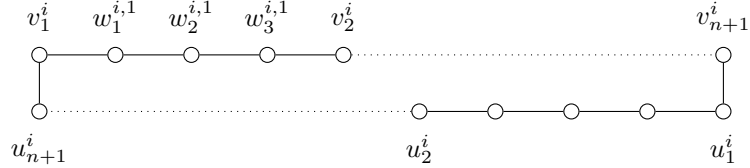
\begin{figure}[ht]
    \centering 
    \begin{tikzpicture}[scale=0.8, transform shape]
    
        \node[] () at (5,6.5) {$v^{i}_1$};
        \node[vertex] (v1) at (5,6) {};
        
        \node[] () at (6,6.5) {$w^{i,1}_1$};
        \node[vertex] (w11) at (6,6) {};
        \node[] () at (7,6.5) {$w^{i,1}_2$};
        \node[vertex] (w12) at (7,6) {};
        \node[] () at (8,6.5) {$w^{i,1}_{3}$};
        \node[vertex] (w1k) at (8,6) {};
    
        \node[] () at (9,6.5) {$v^{i}_2$};
        \node[vertex] (v2) at (9,6) {};
    
        \node[] () at (14,6.5) {$v^{i}_{n+1}$};
        \node[vertex] (vn) at (14,6) {};
    
        \draw[] (v1)--(w11)--(w12)--(w1k);
        \draw[] (v2)--(w1k);
        \draw[dotted] (v2)--(vn);
            
        \begin{scope}[shift={(0,-0.75)}]
            \node[] () at (14,5.5) {$u^{i}_1$};
            \node[vertex] (u1) at (14,6) {};
            
            \node[vertex] (r11) at (13,6) {};
            \node[vertex] (r12) at (12,6) {};
            \node[vertex] (r1k) at (11,6) {};
        
            \node[] () at (10,5.5) {$u^{i}_2$};
            \node[vertex] (u2) at (10,6) {};
        
            \node[] () at (5,5.5) {$u^{i}_{n+1}$};
            \node[vertex] (un) at (5,6) {};
        
            \draw[] (u1)--(r11)--(r12)--(r1k);
            \draw[] (u2)--(r1k);
            \draw[dotted] (u2)--(un);    
        \end{scope}
    
        \draw (v1)--(un);
        \draw (vn)--(u1);
    
        \end{tikzpicture}
        \caption{Choice gadget $\hat{C}_{i}$.}
        \label{fig:inapx_choice_gadget}
    \end{figure}

    \proofsubparagraph*{Adjacency vertices.}
    Let $E^{i,j} \neq \varnothing$.
    Introduce an \emph{adjacency vertex} $e_{i,j}$,
    and add edges $\braces{e_{i,j}, w^{i,p}_x}$ and
    $\braces{e_{i,j}, w^{j,p}_y}$ where $p \in [n]$, and $x, y \in [3]$ such that
    no other adjacency vertex is adjacent to them (since $s_i \leq 3$ for all $i \in [k]$,
    there always exist such $x$ and $y$).
    If $e = \braces{v^i_p, v^j_q} \in E^{i,j}$,
    then add the pair $(w^{i,p}_x, w^{j,q}_y)$ in $\M$.
    Notice that all adjacency vertices have disjoint neighborhoods,
    and let $F \subseteq V(H)$ be the set of all adjacency vertices, where $|F| = \sigma$.

    This concludes the construction of the instance $\mathcal{J}$.
    Notice that $H - F$ is a collection of $k$ choice gadgets, each of which is a cycle.
    Consequently, the graph obtained from $H-F$ by deleting a vertex per choice gadget is a collection of paths,
    thus both $\pw(H)$ and $\fvs(H)$ are at most $\bO(k)$.

    We first prove that if $\OPT(\mathcal{I}) = \sigma$, then $\OPT(\mathcal{J}) = \ell$.
    Consider a function $h \colon [k] \to [n]$ such that $G[\mathcal{V}]$ has $\sigma$ edges,
    where $\mathcal{V} = \setdef{v^i_{h(i)} \in V_i}{i \in [k]} \subseteq V(G)$.
    We construct a family of $\ell$ vertex-disjoint paths as follows.
    First, for every $i \in [k]$,
    we route two demands in~$\hat{C}_{i}$.
    In particular, we route the demands $(v^{i}_{h(i)}, u^{i}_{h(i) + 1})$ as well as
    $(v^{i}_{h(i)+1}, u^{i}_{h(i)})$,
    using the vertices of the shortest path of $\hat{C}_{i}$ in each case.        
    Note that in this step we have created $2k$ vertex-disjoint paths connecting terminal pairs belonging to $\M_c$,
    and in every gadget $\hat{C}_{i}$ the only unused vertices are those of $\hat{P}^{i}_{h(i)}$ and $\hat{R}^{i}_{h(i)}$.
    Then, consider the adjacency vertex~$e_{i,j}$, where $i,j \in [k]$.
    Route the demand $(w^{i,h(i)}_{x}, w^{j,h(j)}_{y})$ via~$e_{i,j}$,
    since both endpoints of the demand are its neighbors and
    have not been used in any path so far, for some $x,y \in [3]$.
    Such a demand indeed exists in~$\M$, since $\braces{v^i_{h(i)}, v^j_{h(j)}} \in E^{i,j}$;
    if that were not the case then~$G[\mathcal{V}]$ has less than~$\sigma$ edges.
    This procedure results in~$\sigma$ additional demands being routed.
    Notice that since the neighborhoods of all adjacency vertices are disjoint,
    the~$\ell$ resulting paths are indeed vertex-disjoint.

    It remains to prove that if $\OPT(\mathcal{I}) < c \cdot \sigma$,
    then $\OPT(\mathcal{J}) < c' \cdot \ell$ (or its contrapositive, as we will do in actuality).
    We start with \cref{claim:inapx_pw_helper}.

    \begin{claim}\label{claim:inapx_pw_helper}
        At most $2$ demands can be routed in a choice gadget $\hat{C}_{i}$ in $H-F$,
        in which case the only unused vertices are those of $\hat{P}^{i}_p$ and $\hat{R}^{i}_p$,
        for some $p \in [n]$.
        Additionally, it holds that $\OPT(\mathcal{J}) \leq \ell$.
    \end{claim}

    \begin{claimproof}
        Let $N = 3n + (n+1)$ and notice that $\hat{C}_{i}$ is a $C_{2N}$,
        thus there are exactly $2$ simple paths connecting any pair of its vertices.
        Moreover, the number of vertices in any simple path between $v^{i}_p$ and $u^{i}_p$,
        including said endpoints, is $N+1$.
        Consequently, to route any demand with both endpoints in $\hat{C}_{i}$,
        either $N+1 - 4$ or $N+1 + 4$ vertices are used.
        In case more than $2$ demands are routed,
        at least $3 (N+1 - 4) = 3N - 9$ vertices are used, which is a contradiction since
        $\hat{C}_{i}$ consists of $2N$ vertices.
        
        Assume that exactly $2$ demands are routed.
        Moreover, assume that a demand is routed using $N+1 + 4$ vertices.
        Then, both routed demands use at least $(N+1 + 4) + (N+1 - 4) > 2N$ vertices, which is a contradiction.
        Therefore, both demands that are routed in $\hat{C}_{i}$ use the shortest path connecting their endpoints.

        Let $(v^{i}_{p}, u^{i}_{q})$ and $(v^{i}_{p'}, u^{i}_{q'})$ denote the demands routed,
        where $p < p'$ and $p \in [n]$.
        It holds that $q = p+1$, since otherwise $q = p-1$,
        and $v^{i}_{p'}$ belongs to the shortest path connecting $(v^{i}_{p}, u^{i}_{q})$, a contradiction.
        Symmetrically, it follows that $q' = p'-1$.
        Lastly, it holds that $q > q'$, since otherwise $u^{i}_{q'}$ belongs to the shortest path connecting $(v^{i}_{p}, u^{i}_{q})$.
        Consequently, it follows that $q > q' \iff p+2 > p'$, which implies that $p < p' < p+2$, i.e., $p' = p+1$.
        In that case, the only unused vertices are those of $\hat{P}^{i}_p$ and $\hat{R}^{i}_p$.

        Notice that $H-F$ is a collection of $k$ choice gadgets.
        Since in each such gadget at most~$2$ demands can be routed,
        it follows that $\OPT(\mathcal{J}) \leq 2k + |F| = 2k + \sigma = \ell$.
    \end{claimproof}

    We now move on to prove that if $\OPT(\mathcal{J}) \geq c' \cdot \ell$, 
    then $\OPT(\mathcal{I}) \geq c \cdot \sigma$.
    Let $\mathcal{P}$ denote a collection of $\OPT(\mathcal{J})$ vertex-disjoint paths of $H$
    routing demands of $\M$, where $\OPT(\mathcal{J}) \geq c' \cdot \ell$.
    Additionally, let $\mathcal{C}^{\mathcal{P}}_j$ contain the choice
    gadgets $\hat{C}_i$ such that there exist
    \emph{exactly} $j$ paths in $\mathcal{P}$ that route demands in the graph induced by $\hat{C}_i$,
    for $j \in [0,2]$.
    Notice that $(\mathcal{C}^{\mathcal{P}}_0, \mathcal{C}^{\mathcal{P}}_1, \mathcal{C}^{\mathcal{P}}_2)$
    defines a partition of the choice gadgets due to \cref{claim:inapx_pw_helper}.
    Moreover, let $\mathcal{P}^F \subseteq \mathcal{P}$ be comprised of the paths of $\mathcal{P}$
    that contain vertices of $F$.
    We will say that a path $P \in \mathcal{P}^F$ \emph{intersects}~$\hat{C}_i$ if $P$ contains a vertex of $\hat{C}_i$.
    We define the \emph{loss} of a solution $\mathcal{P}$ to be $L_{\mathcal{P}} = \ell - |\mathcal{P}|$.
    Notice that due to \cref{claim:inapx_pw_helper} it holds that $L_{\mathcal{P}} = 2 |\mathcal{C}^{\mathcal{P}}_0| + |\mathcal{C}^{\mathcal{P}}_1| + (|F| - |\mathcal{P}^F|)$.
    
    It holds that $\OPT(\mathcal{J}) = |\mathcal{P}|
    = \ell - L_{\mathcal{P}} \geq c' \cdot \ell = \ell - (1 - c') \cdot \ell$,
    thus it follows that $L_{\mathcal{P}} \leq (1 - c') \cdot \ell$.
    Construct a new solution $\mathcal{P}'$ in the following way:
    for every $\hat{C}_i \in \mathcal{C}^{\mathcal{P}}_0 \cup \mathcal{C}^{\mathcal{P}}_1$,
    route two demands of $\M_c$ such that there exist two vertex-disjoint paths using only vertices of $\hat{C}_i$.
    Afterwards, remove any paths of $\mathcal{P}^F$ that are not vertex-disjoint with those.
    There are at most 3 vertices of $F$ adjacent to vertices of~$\hat{C}_i$,
    and by choosing the routed demands of~$\hat{C}_i$ in such a way that at least one
    of those vertices of $\hat{C}_i$ remains unused,
    it follows that at most~$2$ paths of $\mathcal{P}^F$ intersect $\hat{C}_i$.
    Thus it follows that $L_{\mathcal{P}'} \leq (|F| - |\mathcal{P}^F|) + 2 (|\mathcal{C}^\mathcal{P}_0| + |\mathcal{C}^\mathcal{P}_1|) \leq 2 L_{\mathcal{P}}$,        
    therefore $|\mathcal{P}'| = \ell - L_{\mathcal{P}'} \geq \ell - 2 L_{\mathcal{P}}$.
    Furthermore, notice that since for all $i \in [k]$ it holds that $\hat{C}_i \in \mathcal{C}^{\mathcal{P}'}_2$,
    due to \cref{claim:inapx_pw_helper} it follows that only the vertices of
    $\hat{P}^{i}_{h(i)}$ and $\hat{R}^i_{h(i)}$ remain unused by the paths of $\mathcal{P}'$ that
    route demands in $\M_c$,
    for some function $h \colon [k] \to [n]$.
    Consequently, for any routed demand $(w^{i,p}_x, w^{j,q}_y) \in \M \setminus \M_c$, it holds that $p = h(i)$ and $q = h(j)$.

    Let $\mathcal{V} = \setdef{v^i_{h(i)}}{i \in [k]}$, and notice that $|\mathcal{V} \cap V_i| = 1$
    for all $i \in [k]$.
    Let $A = |E(G[\mathcal{V}])|$ denote the number of edges present in the subgraph induced by $\mathcal{V}$.
    We will prove that $A \geq c \cdot \sigma$, in which case it follows that $\OPT(\mathcal{I}) \geq c \cdot \sigma$.
    Notice that $A \geq \ell - 2 L_{\mathcal{P}} - 2k = \sigma - 2 L_{\mathcal{P}}$,
    since this is the number of routed demands in $\M \setminus \M_c$ by $\mathcal{P}'$,
    while $(w^{i,h(i)}_x, w^{j,h(j)}_y) \in \M$ implies that $\braces{v^i_{h(i)}, v^j_{h(j)}} \in E^{i,j}$.

    It suffices to prove that $\sigma - 2 L_{\mathcal{P}} \geq c \cdot \sigma$.
    Since $\sigma - 2 L_{\mathcal{P}} \geq \sigma - 2 (1 - c') \cdot \ell$,
    we will prove $\sigma - 2 (1 - c') \cdot \ell \geq c \cdot \sigma$ instead,
    which is equivalent to $c' \geq 1 + \frac{\sigma}{2\ell} (c-1)$.
    Since $c-1 < 0$ and $\frac{\sigma}{2\ell} = \frac{1}{2} \cdot \frac{\sigma}{2k+\sigma}
    \geq \frac{1}{2} \cdot \frac{k / 2}{2k + k / 2} = \frac{1}{10}$,
    the statement holds for $c' = 1 + \frac{1}{10} (c-1) = \frac{9+c}{10}$.
\end{proof}

\section{XNLP-completeness}\label{sec:xnlp}

The \Wone-hardness results of~\cite{swat/EneMPR16,soda/MarxW15}
already imply that {\mNDP} is \Wone-hard parameterized by the pathwidth of the input graph.
Here we examine in more detail the parameterization solely by pathwidth,
and prove that in this case the problem is in fact XNLP-complete.
This complexity class was recently brought forth by Bodlaender, Groenland, Nederlof, and Swennenhuis~\cite{iandc/BodlaenderGNS24}
and consists of the parameterized problems such that an instance $(x,k)$,
where $x$ can be encoded with $n$ bits and $k$ denotes the parameter,
can be solved non-deterministically in time $f(k) n^{\bO(1)}$ and space $f(k) \log n$, for some computable function~$f$.

Such a completeness result in fact implies that {\mNDP} parameterized by pathwidth is W[$t$]-hard for all $t \in \N$.
To prove said result, we reduce from the XNLP-complete \CMC,
and use a construction quite similar to the one of \cref{thm:inapx_pw}.

\begin{theoremrep}[\appsymb]\label{thm:pw_xnlp}
    {\mNDP} parameterized by the pathwidth of the input graph is \XNLP-complete.
\end{theoremrep}

\begin{proof}

    We first argue that {\mNDP} parameterized by the pathwidth of the input graph belongs to \XNLP.
    Let $\mathcal{I} = (G, \M, \ell)$ be an instance of \mNDP,
    where $n = |\mathcal{I}|$ denotes the number of bits needed to encode $\mathcal{I}$,
    and assume that we are also given a path decomposition of $G$ of width $\pw$.
    We describe a non-deterministic algorithm for $\mathcal{I}$ that uses $\bO(\pw \log n)$ bits of memory.
    Fix an optimal solution, and observe that for each bag, at most $\pw$ paths of the solution intersect the bag.
    The algorithm guesses and stores for each bag both endpoints of the paths of the solution intersecting it
    ($2\pw \log n$) bits.
    Moreover, the algorithm guesses and stores for each vertex of the bag an integer from $[0, \ell]$,
    where $0$ indicates that this vertex is not used in the solution,
    otherwise said vertex is part of the $i$-th satisfied demand path.
    For vertices for which we have not stored $0$,
    we also remember their degree, i.e., how many of their neighbors that belong to the same path
    are present in the bags of the decomposition up to this point
    (this is an integer in $\{0,1,2\}$).
    In total, $\bO(\pw \log n)$ bits are used for this information.
    It is easy to guess and update these values when introducing a vertex.
    When Forgetting, the vertex must either have degree $2$, or alternatively degree $1$ and be an endpoint of its path.
    If both endpoints of a path have been Forgotten, we remove this from the list of active paths.
    Keep a counter of how many correct satisfied paths we have seen, using $\bO(\log n)$ additional bits,
    and check in the end that it is at least $\ell$.

    In order to prove the \XNLP-hardness,
    we present a \emph{parameterized logspace reduction}~\cite{iandc/BodlaenderGNS24} from \CMC,
    which is known to be XNLP-complete parameterized by $k$~\cite{iandc/BodlaenderGNS24},
    and is formally defined as follows:

    \problemdef{\CMC}
    {Graph $G=(V,E)$, a partition of $V$ into sets $V_1, \ldots, V_r$, as well as
    a coloring function $f \colon V \to [k]$,
    where for every $\braces{u,v} \in E$, if $u \in V_{i_1}$ and $v \in V_{i_2}$,
    then $|i_1 - i_2| \leq 1$.}
    {Determine whether there exists $W \subseteq V$ such that for every $i \in [r-1]$,
    $W \cap (V_i \cup V_{i+1})$ is a clique, and for all $i \in [r]$ and $j \in [k]$,
    there is a vertex $w \in W \cap V_i$ with $f(w) = j$.}

    On a high level, {\CMC} asks whether there exists a clique with $2k$ vertices in $V_i \cup V_{i+1}$ for each $i \in [r-1]$,
    containing a vertex of color $j$ both in $V_i$ and in $V_{i+1}$, for all colors $j \in [k]$.
    Importantly, the same vertices in $V_i$ are chosen in the clique for both $V_{i-1} \cup V_i$ and $V_i \cup V_{i+1}$.
    We call such a set a \emph{chained multicolored clique}.
    Furthermore, one can assume without loss of generality that $|\setdef{v \in V_i}{f(v) = j}| = n$, for every $i \in [r]$ and $j \in [k]$,
    since one can arbitrarily add a sufficient number of vertices of degree $0$.

    Let $(G,f,\setdef{V_i}{i \in [r]})$ be an instance of \CMC,
    and let $V_{i,j} = \braces{v^{i,j}_1, \ldots, v^{i,j}_n}$ denote the set of vertices belonging to $V_i$ that are of color $j$.
    We will construct in polynomial time an equivalent instance $(H, \M, \ell)$ of \mNDP,
    where $H$ is a graph of pathwidth $\pw(H) = \bO(k^2)$ and size $|V(H)| = \bO(|V(G)| \cdot k)$,
    $\M \subseteq \binom{V(H)}{2}$ is a set of demands,
    and $\ell = 2 r k + r \binom{k}{2} + (r-1) k^2$,
    such that $G$ has a chained multicolored clique if and only if at least $\ell$ demands of $\M$ can be routed.

    The construction will be very similar to the one of \cref{thm:inapx_pw}, albeit with a few differences.
    In particular, we will once again employ the use of the choice gadgets introduced there and we refer the reader to \cref{fig:inapx_choice_gadget}
    for an illustration.

    \proofsubparagraph*{Choice Gadget.}
    For every set $V_{i,j}$, where $i \in [r]$ and $j \in [k]$,
    we construct the \emph{choice gadget} $\hat{C}_{i,j}$ in the following way.
    First, for $p \in [n]$, we construct paths on vertex sets $\hat{P}^{i,j}_p = \setdef{w^{i,j,p}_q}{q \in [3k]}$,
    as well as paths $\hat{R}^{i,j}_p$ on $3k$ unnamed vertices each.
    Then, we introduce vertices $v^{i,j}_p$ and $u^{i,j}_p$, for $p \in [n+1]$.
    Next, we add edges $\braces{u^{i,j}_{n+1}, v^{i,j}_1}$ and $\braces{v^{i,j}_{n+1}, u^{i,j}_1}$.
    Lastly, for $p \in [n]$, we add edges $\braces{v^{i,j}_p, w^{i,j,p}_1}$ and $\braces{w^{i,j,p}_{3k}, v^{i,j}_{p+1}}$,
    as well as an edge from $u^{i,j}_p$ to one endpoint of $\hat{R}^{i,j}_p$, and an edge from $u^{i,j}_{p+1}$ to the other endpoint of $\hat{R}^{i,j}_p$.
    Subsequently, add to $\M$ all pairs $(v^{i,j}_p, u^{i,j}_{p+1})$ as well as $(v^{i,j}_p, u^{i,j}_{p-1})$, for $p \in [2, n]$,
    as well as the pairs $(v^{i,j}_1, u^{i,j}_2)$ and $(v^{i,j}_{n+1}, u^{i,j}_n)$.
    Let $\M_c \subseteq \M$ denote all such pairs added to $\M$ in this step of the construction.
    Intuitively, we will consider a one-to-one mapping between the vertex $v^{i,j}_p$ of $V_{i,j}$ belonging to a supposed chained multicolored clique of $G$ and
    the vertices of $\hat{P}^{i,j}_p$ not being used to route any of the demands in $\M_c$.

    \proofsubparagraph*{Adjacency vertices.}
    Next, we introduce some \emph{adjacency vertices} into the graph.
    For every $i \in [r]$ and $\braces{j_1,j_2} \in \binom{[k]}{2}$,
    introduce vertex $e^i_{j_1,j_2}$ and add edges $\braces{e^i_{j_1,j_2}, w^{i,j_1,p}_{j_2}}$ and
    $\braces{e^i_{j_1,j_2}, w^{i,j_2,p}_{j_1}}$ for all $p \in [n]$.
    Additionally, if $\braces{v^{i,j_1}_p, v^{i,j_2}_q} \in E(G)$,
    then add the pair $(w^{i,j_1,p}_{j_2}, w^{i,j_2,q}_{j_1})$ in $\M$.
    Next, for every $i \in [r-1]$ and $j,j' \in [k]$,
    introduce vertex $e^{i,i+1}_{j,j'}$ and add edges $\braces{e^{i,i+1}_{j,j'}, w^{i,j,p}_{2k + j'}}$ and
    $\braces{e^{i,i+1}_{j,j'}, w^{i+1,j',p}_{k + j}}$ for all $p \in [n]$.
    Additionally, if $\braces{v^{i,j}_p, v^{i+1,j'}_q} \in E(G)$,
    then add the pair $(w^{i,j,p}_{2k + j'}, w^{i+1,j',q}_{k + j})$ in $\M$.
    Notice that all adjacency vertices have disjoint neighborhoods.

    A \emph{set gadget} $\hat{S}_i$, where $i \in [r]$,
    is composed of the choice gadgets $\hat{C}_{i,j}$ for all $j \in [k]$ as well as all the adjacency vertices $e^i_{j,j'}$, where $\braces{j,j'} \in \binom{[k]}{2}$.

    This concludes the construction of $H$.
    For an illustration, see \cref{fig:hardness_xnlp}.
    It holds that $|V(H)| = rk(6nk + 2n + 2) + r \binom{k}{2} + (r - 1) k^2 = \bO(r n k^2)$,
    while $|V(G)| = r n k$, therefore $|V(H)| = \bO(|V(G)| \cdot k)$ follows.
    It remains to argue about the space usage of the algorithm that constructs the instance $(H, \M, \ell)$.
    Notice that in order to uniquely identify the vertices of $H$, it suffices to keep track of $5$ different indices:
    one identifies the kind of vertex ($u$, $v$, $w$, $e^i_{j_1, j_2}$, $e^{i,i+1}_{j,j'}$,
    or a vertex of some path $\hat{R}$ in a choice gadget),
    and the rest are used to distinguish among vertices of the same kind.
    In that case, the number of all such encodings is $(n r k)^{\bO(1)}$.
    Moreover, given one such encoding, we can determine the other encodings (i.e., vertices)
    with which the first either has an edge, or takes part in a demand,
    since it suffices to only alter some of these indices by some constant,
    thus the space usage is $\bO(\log (n k r))$.

    \begin{figure}[ht]
    \centering 
    \begin{tikzpicture}[scale=1, transform shape]
    
        \node[rectangle,draw,minimum width=1.5cm,minimum height = 1.1cm] (ci1) at (5,5) {$\hat{C}_{i,1}$};
    
        \node[] () at (5,7.6) {$\vdots$};
    
        \node[rectangle,draw,minimum width=1.5cm,minimum height = 1.1cm] (cik) at (5,10) {$\hat{C}_{i,k}$};

        \node[] () at (4.6,7.5) {$e^i_{1,k}$};
        \node[vertex] (ei1k) at (4.2,7.5) {};
        \draw[dashed] (ei1k)--(ci1);
        \draw[dashed] (ei1k)--(cik);
    
        \begin{scope}[shift={(-5,0)}]
            \node[rectangle,draw,minimum width=1.5cm,minimum height = 1.1cm] (ci-1) at (5,5) {$\hat{C}_{i-1,1}$};
        
            \node[] () at (5,7.6) {$\vdots$};
        
            \node[rectangle,draw,minimum width=1.5cm,minimum height = 1.1cm] (ci-k) at (5,10) {$\hat{C}_{i-1,k}$};

            \node[] () at (3.8,7.5) {$e^{i-1}_{1,k}$};
            \node[vertex] (ei-1k) at (4.2,7.5) {};
            \draw[dashed] (ei-1k)--(ci-1);
            \draw[dashed] (ei-1k)--(ci-k);
        \end{scope}
    
        \node[] () at (2.5,4.5) {$e^{i-1,i}_{1,1}$};
        \node[vertex] (ei-i11) at (2.5,5) {};
        \draw[dashed] (ei-i11)--(ci1);
        \draw[dashed] (ei-i11)--(ci-1);
    
        \node[] () at (2.5,10.5) {$e^{i-1,i}_{k,k}$};
        \node[vertex] (ei-ikk) at (2.5,10) {};
        \draw[dashed] (ei-ikk)--(cik);
        \draw[dashed] (ei-ikk)--(ci-k);
    
        \node[] () at (2.5,6.5) {$e^{i-1,i}_{1,k}$};
        \node[vertex] (ei-i1k) at (2.5,7) {};
        \draw[dashed] (ei-i1k)--(ci-1);
        \draw[dashed] (ei-i1k)--(cik);
    
        \node[] () at (2.5,8.5) {$e^{i-1,i}_{k,1}$};
        \node[vertex] (ei-ik1) at (2.5,8) {};
        \draw[dashed] (ei-ik1)--(ci-k);
        \draw[dashed] (ei-ik1)--(ci1);
    
        \begin{scope}[shift={(5,0)}]
            \node[rectangle,draw,minimum width=1.5cm,minimum height = 1.1cm] (ci+1) at (5,5) {$\hat{C}_{i+1,1}$};
        
            \node[] () at (5,7.6) {$\vdots$};
        
            \node[rectangle,draw,minimum width=1.5cm,minimum height = 1.1cm] (ci+k) at (5,10) {$\hat{C}_{i+1,k}$};

            \node[] () at (6.4,7.5) {$e^{i+1}_{1,k}$};
            \node[vertex] (ei+1k) at (5.8,7.5) {};
            \draw[dashed] (ei+1k)--(ci+1);
            \draw[dashed] (ei+1k)--(ci+k);

            \node[] () at (2.5,4.5) {$e^{i,i+1}_{1,1}$};
            \node[vertex] (eii+11) at (2.5,5) {};
            \draw[dashed] (eii+11)--(ci1);
            \draw[dashed] (eii+11)--(ci+1);
        
            \node[] () at (2.5,10.5) {$e^{i,i+1}_{k,k}$};
            \node[vertex] (eii+kk) at (2.5,10) {};
            \draw[dashed] (eii+kk)--(cik);
            \draw[dashed] (eii+kk)--(ci+k);
        
            \node[] () at (2.5,6.5) {$e^{i,i+1}_{1,k}$};
            \node[vertex] (eii+1k) at (2.5,7) {};
            \draw[dashed] (eii+1k)--(ci1);
            \draw[dashed] (eii+1k)--(ci+k);
        
            \node[] () at (2.5,8.5) {$e^{i,i+1}_{k,1}$};
            \node[vertex] (eii+k1) at (2.5,8) {};
            \draw[dashed] (eii+k1)--(ci+1);
            \draw[dashed] (eii+k1)--(cik);
        
        \end{scope}

        \end{tikzpicture}
        \caption{Part of graph $H$ regarding set gadgets $\hat{S}_{i-1}$, $\hat{S}_{i}$, and $\hat{S}_{i+1}$.
        The vertices denote the middle vertex of the corresponding adjacency path,
        while dashed lines indicate that the respective adjacency path has neighbors in the corresponding choice gadget.
        }
        \label{fig:hardness_xnlp}
    \end{figure}

    \begin{lemma}\label{lem:xnlp_pw}
        It holds that $\pw(H) = \bO(k^2)$.
    \end{lemma}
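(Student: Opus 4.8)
The plan is to exhibit an explicit path decomposition of $H$ of width $\bO(k^2)$. The starting observation is that $H$ has a very structured shape: after deleting all adjacency vertices, it falls apart into the choice gadgets $\hat{C}_{i,j}$, each of which is a single cycle and hence has pathwidth $2$. Every adjacency vertex has its neighbourhood contained in exactly two of these cycles — $e^i_{j_1,j_2}$ touches $\hat{C}_{i,j_1}$ and $\hat{C}_{i,j_2}$, while $e^{i,i+1}_{j,j'}$ touches $\hat{C}_{i,j}$ and $\hat{C}_{i+1,j'}$ — and crucially these links are \emph{local} with respect to the linear order of the set gadgets: an adjacency vertex either joins two choice gadgets inside the same $\hat{S}_i$, or two choice gadgets in consecutive set gadgets $\hat{S}_i,\hat{S}_{i+1}$. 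I would therefore build the decomposition by sweeping the choice gadgets in lexicographic order $\hat{C}_{1,1},\dots,\hat{C}_{1,k},\hat{C}_{2,1},\dots,\hat{C}_{r,k}$, processing each cycle with a constant-size sliding window together with a single anchor vertex of that cycle (the standard width-$2$ decomposition of a cycle), while additionally keeping in each bag the adjacency vertices that are currently ``open''.

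The heart of the argument is the accounting of how many adjacency vertices must sit in a bag simultaneously. Since each adjacency vertex is joined to vertices $w^{i,j,p}_\ast$ for \emph{all} $p\in[n]$, its neighbours are spread along the entire length of both of its cycles; consequently it must remain in the bag for the whole sweep of both cycles, i.e.\ throughout the processing of (at most) two consecutive set gadgets. Concretely, while processing $\hat{S}_i$ I would keep the internal vertices $e^i_{j_1,j_2}$ for $\{j_1,j_2\}\in\binom{[k]}{2}$, the incoming vertices $e^{i-1,i}_{j,j'}$, and the outgoing vertices $e^{i,i+1}_{j,j'}$ for $j,j'\in[k]$; this amounts to $\binom{k}{2}+k^2+k^2=\bO(k^2)$ adjacency vertices. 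Because $e^{i-1,i}_{\cdot,\cdot}$ is held only during $\hat{S}_{i-1}$ and $\hat{S}_i$, $e^i_{\cdot,\cdot}$ only during $\hat{S}_i$, and $e^{i,i+1}_{\cdot,\cdot}$ only during $\hat{S}_i$ and $\hat{S}_{i+1}$, every adjacency vertex occupies a contiguous block of bags, so the decomposition is valid.

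With the adjacency vertices handled, the cycle sweep itself only adds $\bO(1)$ vertices per bag (the sliding window plus the anchor closing the current cycle), and I would verify edge coverage: the cycle edges and the edges from $u,v$ to $\hat{R},\hat{P}$ are covered by the sliding window, while every edge from an adjacency vertex to a $w$-vertex is covered because the adjacency vertex is already resident in the bag at the moment its $w$-neighbour enters the window. Summing up, each bag has size $\bO(k^2)+\bO(1)=\bO(k^2)$, which yields $\pw(H)=\bO(k^2)$. The main obstacle is precisely the fact that adjacency vertices are attached to vertices spread over the full length of their cycles, forcing them to stay in the bag for an entire sweep; the plan succeeds because the locality of these vertices to at most two consecutive set gadgets guarantees that only $\bO(k^2)$ of them are ever open at once.
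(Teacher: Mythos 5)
Your proposal is correct and follows essentially the same approach as the paper: a left-to-right sweep over the set gadgets that keeps all $\bO(k^2)$ adjacency vertices incident to the current level (and its neighbours) resident in the bags while processing the choice-gadget cycles one by one with a constant-width window. Your write-up is in fact more detailed than the paper's one-sentence sketch, in particular in verifying contiguity and edge coverage.
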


    \begin{nestedproof}
        In the following, let $j_1,j_2,j$, and $j'$ such that $\braces{j_1,j_2} \in \binom{[k]}{2}$ and $j, j' \in [k]$.
        Start with $i=1$ and consider a path decomposition that consists of bags containing all vertices
        $e^{i-1,i}_{j,j'}$, $e^i_{j_1,j_2}$, and $e^{i,i+1}_{j,j'}$,
        and then going through all choice gadgets of level $i$ one by one,
        before moving on to level $i+1$ and so on, until level $r$.
        %
        %
        %
\end{nestedproof}

    \begin{lemma}\label{lem:xnlp_cor1}
        If $G$ contains a chained multicolored clique,
        then $(H, \M, \ell)$ is a Yes instance of \mNDP.
    \end{lemma}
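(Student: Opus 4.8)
The plan is to mirror the forward direction of \cref{thm:inapx_pw}, adapting it to the chained structure. First I would extract from a chained multicolored clique $W$ a selection function $h \colon [r] \times [k] \to [n]$: for each level $i \in [r]$ and colour $j \in [k]$, the multicolored condition guarantees some vertex $v^{i,j}_{h(i,j)} \in W \cap V_{i,j}$, and I fix one such choice. Since the chosen vertices form a subset of $W$, for every $i \in [r-1]$ the set $\setdef{v^{i,j}_{h(i,j)}}{j \in [k]} \cup \setdef{v^{i+1,j'}_{h(i+1,j')}}{j' \in [k]}$ is a clique in $G$; in particular every chosen pair within a level, and every chosen pair across consecutive levels, is an edge of $G$.

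With $h$ in hand, I would build a family of exactly $\ell$ vertex-disjoint paths in three batches matching the three summands of $\ell = 2rk + r\binom{k}{2} + (r-1)k^2$. In each choice gadget $\hat{C}_{i,j}$ I route the two demands $(v^{i,j}_{h(i,j)}, u^{i,j}_{h(i,j)+1})$ and $(v^{i,j}_{h(i,j)+1}, u^{i,j}_{h(i,j)})$ of $\M_c$ along their shortest paths, exactly as in \cref{thm:inapx_pw}; by the argument of \cref{claim:inapx_pw_helper} this uses $2rk$ paths and leaves free in each gadget precisely the vertices of $\hat{P}^{i,j}_{h(i,j)}$ and $\hat{R}^{i,j}_{h(i,j)}$. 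Next, for every intra-level adjacency vertex $e^i_{j_1,j_2}$ I route the demand $(w^{i,j_1,h(i,j_1)}_{j_2}, w^{i,j_2,h(i,j_2)}_{j_1})$ through $e^i_{j_1,j_2}$; this pair lies in $\M$ because $\braces{v^{i,j_1}_{h(i,j_1)}, v^{i,j_2}_{h(i,j_2)}} \in E(G)$, and both its endpoints are free neighbours of $e^i_{j_1,j_2}$, giving $r\binom{k}{2}$ length-two paths. Finally, for every inter-level adjacency vertex $e^{i,i+1}_{j,j'}$ I route $(w^{i,j,h(i,j)}_{2k+j'}, w^{i+1,j',h(i+1,j')}_{k+j})$ through it, which is a demand of $\M$ since the chosen colour-$j$ vertex of $V_i$ and the chosen colour-$j'$ vertex of $V_{i+1}$ are adjacent in $G$; this adds $(r-1)k^2$ paths, for $\ell$ in total.

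The step I expect to require the most care is verifying that all these paths are pairwise vertex-disjoint. The choice-gadget paths are internally disjoint and confined to their own gadgets, so the only possible conflicts are among the adjacency paths and the $w$-vertices they borrow. Each adjacency path uses a distinct adjacency vertex (their neighbourhoods are disjoint) together with two $w$-vertices lying in free $\hat{P}$-paths, so it suffices to check that no $w$-vertex is claimed twice. Here I would invoke the careful indexing of the construction: inside a single gadget $\hat{C}_{i,j}$ the intra-level adjacency vertices attach to subscripts in $[k] \setminus \braces{j}$, the inter-level vertices coming from level $i-1$ attach to subscripts in $[k+1, 2k]$, and those going to level $i+1$ attach to subscripts in $[2k+1, 3k]$; since these three ranges are pairwise disjoint, every $w$-vertex is used by at most one adjacency path (and boundary levels $i=1, r$ merely use fewer of them). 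This establishes disjointness of all $\ell$ paths, so $\OPT(\mathcal{J}) \geq \ell$ and hence $(H, \M, \ell)$ is a Yes instance of \mNDP.
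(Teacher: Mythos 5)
Your proposal is correct and follows essentially the same route as the paper's proof: route the two shortest-path demands of $\M_c$ per choice gadget, then one length-two path through each intra-level and each inter-level adjacency vertex, using the fact that the adjacency vertices have disjoint neighborhoods to conclude disjointness. Your explicit check that the subscript ranges $[k]$, $[k+1,2k]$, $[2k+1,3k]$ keep the borrowed $w$-vertices distinct is a slightly more detailed version of the same disjointness argument.
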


    \begin{nestedproof}
        Let $\mathcal{W} \subseteq V(G)$ be a chained multicolored clique of $G$,
        consisting of vertices $\setdef{v^{i,j}_{s(i,j)}}{i \in [r], j \in [k]}$.
        We construct a family of $\ell$ vertex-disjoint paths as follows.

        First, for every $i \in [r]$ and $j \in [k]$,
        we route two demands in $\hat{C}_{i,j}$.
        In particular, we route the demands $(v^{i,j}_{s(i,j)}, u^{i,j}_{s(i,j) + 1})$ as well as $(v^{i,j}_{s(i,j)+1}, u^{i,j}_{s(i,j)})$,
        using the vertices of the shortest path of $\hat{C}_{i,j}$ in each case.
        Note that in this step we have created $2 r k$ vertex-disjoint paths connecting terminal pairs belonging to $\M_c$,
        and in every gadget $\hat{C}_{i,j}$ the only unused vertices are those of $\hat{P}^{i,j}_{s(i,j)}$ and $\hat{R}^{i,j}_{s(i,j)}$.

        Then, for $i \in [r]$ and $\braces{j_1, j_2} \in \binom{[k]}{2}$, consider the adjacency vertex $e^i_{j_1,j_2}$.
        Route the demand $(w^{i,j_1,s(i,j_1)}_{j_2}, w^{i,j_2,s(i,j_2)}_{j_1})$ via said adjacency vertex,
        since both endpoints of the demand are its neighbors and have not been used in any path so far.
        This procedure results in $r \binom{k}{2}$ additional demands being routed.

        Lastly, for $i \in [r-1]$ and $j,j' \in [k]$, consider the adjacency vertex $e^{i, i+1}_{j,j'}$.
        Route the demand $(w^{i,j,s(i,j)}_{2k + j'}, w^{i+1,j',s(i+1, j')}_{k + j})$ via said adjacency vertex,
        since both endpoints of the demand are its neighbors and have not been used in any path so far.
        Such a demand indeed exists in $\M$, since $\braces{v^{i,j}_{s(i,j)}, v^{i+1,j'}_{s(i+1,j')}} \in E(G)$.
        This procedure results in $(r-1) k^2$ additional demands being routed.

        Notice that since the neighborhoods of all adjacency vertices are disjoint,
        the $\ell$ resulting paths are indeed vertex disjoint.
    \end{nestedproof}

    \begin{lemma}\label{lem:xnlp_cor2}
        If $(H, \M, \ell)$ is a Yes instance of \mNDP,
        then $G$ contains a chained multicolored clique.
    \end{lemma}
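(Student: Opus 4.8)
The plan is to establish this soundness direction by first pinning down a matching upper bound $\OPT(H) \le \ell$ and then arguing that any solution attaining it is forced to encode a chained multicolored clique. As in the proof of \cref{thm:inapx_pw}, I would partition the paths of a feasible solution $\mathcal{P}$ according to whether they use a vertex of $F$: a path avoiding $F$ lives entirely within a single choice gadget, which is a connected component of $H - F$, and by the analogue of \cref{claim:inapx_pw_helper} (each $\hat{C}_{i,j}$ being a cycle $C_{2N}$ with $N = 3kn + n + 1$) at most two such paths can be placed in each of the $rk$ gadgets; meanwhile every path meeting $F$ consumes at least one adjacency vertex, and there are exactly $|F| = r\binom{k}{2} + (r-1)k^2$ of them. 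Hence $|\mathcal{P}| \le 2rk + |F| = \ell$.

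Now, if $(H,\M,\ell)$ is a Yes instance then $|\mathcal{P}| = \ell$, which forces both inequalities to be tight: exactly two paths avoiding $F$ occupy each gadget, and the paths meeting $F$ number exactly $|F|$, so by vertex-disjointness each uses exactly one adjacency vertex and every adjacency vertex is used. Applying \cref{claim:inapx_pw_helper} to the two $F$-avoiding paths of $\hat{C}_{i,j}$ shows that they route two demands of $\M_c$ and leave unused precisely the vertices of $\hat{P}^{i,j}_{s(i,j)}$ and $\hat{R}^{i,j}_{s(i,j)}$, for a well-defined index $s(i,j) \in [n]$. This yields a selection function $s \colon [r]\times[k] \to [n]$, and I set $\mathcal{W} = \setdef{v^{i,j}_{s(i,j)}}{i \in [r], j \in [k]}$, which by construction contains exactly one vertex of each color in each $V_i$.

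The heart of the argument is to read off the edges of $G$ certifying that $\mathcal{W}$ is a clique in each consecutive window. Since each path meeting $F$ uses exactly one adjacency vertex, its two endpoints lie in the two distinct gadgets joined by that vertex; in particular it routes a non-$\M_c$ demand. Consider the path through $e^i_{j_1,j_2}$: its endpoints are $w$-vertices of $\hat{C}_{i,j_1}$ and $\hat{C}_{i,j_2}$ that are left free, hence they must belong to $\hat{P}^{i,j_1}_{s(i,j_1)}$ and $\hat{P}^{i,j_2}_{s(i,j_2)}$ respectively; as the only demand of $\M$ joining these two gadgets through $e^i_{j_1,j_2}$ with free endpoints is $(w^{i,j_1,s(i,j_1)}_{j_2}, w^{i,j_2,s(i,j_2)}_{j_1})$, and such a demand was added only when $\braces{v^{i,j_1}_{s(i,j_1)}, v^{i,j_2}_{s(i,j_2)}} \in E(G)$, this edge is present. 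The identical reasoning applied to $e^{i,i+1}_{j,j'}$ yields $\braces{v^{i,j}_{s(i,j)}, v^{i+1,j'}_{s(i+1,j')}} \in E(G)$ for all $j,j' \in [k]$ and $i \in [r-1]$.

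Since every adjacency vertex is used, these edges cover all intra-level pairs (through the $e^i_{j_1,j_2}$) and all inter-level pairs (through the $e^{i,i+1}_{j,j'}$), so $\mathcal{W} \cap (V_i \cup V_{i+1})$ is a clique for every $i \in [r-1]$; together with the multicolored property this certifies that $\mathcal{W}$ is a chained multicolored clique. The main obstacle I anticipate is the bookkeeping that makes the freeness forcing rigorous: one must argue that a path meeting $F$ cannot secretly route an $\M_c$ demand nor pass through two adjacency vertices (both ruled out by the exact count $|F|$ together with the fact that adjacency vertices are the only inter-gadget connections), and that \cref{claim:inapx_pw_helper} really confines the unused vertices of each gadget to a single $\hat{P}$--$\hat{R}$ pair even though the $F$-paths may borrow vertices from it.
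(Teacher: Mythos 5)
Your proposal is correct and follows essentially the same route as the paper's proof: the counting argument forcing exactly two $\M_c$-paths per choice gadget and exactly one adjacency vertex per remaining path, the extraction of the selection function $s(i,j)$ from the unused $\hat{P}$--$\hat{R}$ pair, and reading off the clique edges from the forced demands through each adjacency vertex. The one bookkeeping worry you flag at the end (an $F$-path secretly routing an $\M_c$ demand by re-entering the same gadget) is resolved not by the count alone but by the fact that the two confined paths already consume every $v$- and $u$-vertex of the gadget, leaving only $\hat{P}^{i,j}_{s(i,j)} \cup \hat{R}^{i,j}_{s(i,j)}$ free, which contains no endpoint of any $\M_c$ demand.
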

    
    \begin{nestedproof}
        Let $\mathcal{P} = \braces{P_1, \ldots, P_\ell}$ be a set of $\ell$ vertex-disjoint paths connecting terminal pairs of $\M$ in $H$.
        Assume without loss of generality that said paths are simple,
        as well as that the only edges among vertices of a path appear between its consecutive vertices.
        Moreover, define $\mathcal{P}_{c} = \setdef{P_i \in \mathcal{P}}{P_i \text{ routes a demand in } \M_{c}}$.
        Set $S$ to be the set consisting of all the adjacency vertices in $H$, where $|S| = r \binom{k}{2} + (r-1) k^2$.

        Notice that at most $|S|$ paths of $\mathcal{P}$ contain vertices of $S$, consequently at least $\ell - |S| = 2 r k$ paths
        route demands using only vertices of $H-S$.
        Moreover, $H-S$ is a collection of $rk$ disconnected choice gadgets.
        We start with \cref{claim:xnlp_demands_per_choice_gadget},
        whose proof is omitted since it is analogous to the one of \cref{claim:inapx_pw_helper}.

        \begin{claim}\label{claim:xnlp_demands_per_choice_gadget}
            At most $2$ demands can be routed in a choice gadget $\hat{C}_{i,j}$ in $H-S$,
            in which case the only unused vertices are those of $\hat{P}^{i,j}_p$ and $\hat{R}^{i,j}_p$, for some $p \in [n]$.
        \end{claim}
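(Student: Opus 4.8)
The plan is to follow the proof of \cref{claim:inapx_pw_helper} essentially verbatim, the only bookkeeping change being that each path $\hat{P}^{i,j}_p$ and each path $\hat{R}^{i,j}_p$ now consists of $3k$ vertices rather than $3$. First I would observe that $\hat{C}_{i,j}$ is again a single cycle: tracing the edges of the construction, its vertices occur in the cyclic order $v^{i,j}_1, \hat{P}^{i,j}_1, v^{i,j}_2, \ldots, v^{i,j}_{n+1}, u^{i,j}_1, \hat{R}^{i,j}_1, u^{i,j}_2, \ldots, u^{i,j}_{n+1}$ and back to $v^{i,j}_1$. Setting $N = (3k+1)n + 1$, this is a $C_{2N}$ in which every $v^{i,j}_p$ is antipodal to $u^{i,j}_p$ (both arcs between them have $N$ edges), so either simple path joining $v^{i,j}_p$ and $u^{i,j}_p$ contains exactly $N+1$ vertices.

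Next I would count, exactly as before, the vertices consumed by a single routed demand. Every demand of $\M_c$ with both endpoints in $\hat{C}_{i,j}$ joins some $v^{i,j}_p$ to $u^{i,j}_{p\pm 1}$, which sits one block of size $3k+1$ away from the antipode of $v^{i,j}_p$; hence it is routed using either $N - 3k$ (shorter arc) or $N + 3k + 2$ (longer arc) vertices. Since $\hat{C}_{i,j}$ has only $2N$ vertices, three routed demands would require at least $3(N - 3k) = 3N - 9k > 2N$ vertices — the inequality $N > 9k$ holding because $N = (3k+1)n + 1$ and $n$ may be taken large enough after the padding of the color classes — a contradiction, so at most $2$ demands are routed. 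If exactly $2$ are routed and one takes the longer arc, together they use at least $(N + 3k + 2) + (N - 3k) = 2N + 2 > 2N$ vertices, again impossible; thus both use their shorter arcs, consuming $2N - 6k$ vertices and leaving precisely $6k = |\hat{P}^{i,j}_p| + |\hat{R}^{i,j}_p|$ of them unused.

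It then remains to pin down \emph{which} $6k$ vertices are unused, and here the combinatorial argument of \cref{claim:inapx_pw_helper} transfers without change. Writing the two routed demands as $(v^{i,j}_p, u^{i,j}_q)$ and $(v^{i,j}_{p'}, u^{i,j}_{q'})$ with $p < p'$ and $p \in [n]$, I would argue that $q = p+1$ (otherwise $q = p-1$ and $v^{i,j}_{p'}$ would lie on the shorter arc routing the first demand), symmetrically that $q' = p'-1$, and that $q > q'$ (otherwise $u^{i,j}_{q'}$ would lie on that same arc). Combining $q > q'$ with $q = p+1$ and $q' = p'-1$ yields $p + 2 > p'$, hence $p' = p+1$; a direct check then shows that the two shorter arcs cover every vertex of $\hat{C}_{i,j}$ except exactly those of $\hat{P}^{i,j}_p$ and $\hat{R}^{i,j}_p$, as claimed.

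Because the statement is designed to be the analogue of \cref{claim:inapx_pw_helper}, I expect no genuine obstacle. The only point requiring a little care is the arithmetic once the block size is $3k$ instead of $3$: one must confirm that $3(N - 3k) > 2N$ still forces the contradiction and that two shortest arcs leave behind exactly one $\hat{P}$-block together with one $\hat{R}$-block (i.e.\ $3k + 3k = 6k$ vertices). The structural heart of the argument — the deduction $p' = p+1$ — is identical to the $k=1$ case and needs no modification.
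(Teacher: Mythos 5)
Your proposal is correct and follows exactly the route the paper intends: the paper omits this proof, stating only that it is analogous to \cref{claim:inapx_pw_helper}, and you have carried out that analogy faithfully, with the right generalized arithmetic ($N=(3k+1)n+1$, shorter/longer arcs of $N-3k$ and $N+3k+2$ vertices, $6k$ leftover vertices) and the unchanged combinatorial deduction $p'=p+1$. The only point worth noting is that the inequality $3(N-3k)>2N$ needs $n\geq 3$, but the same implicit assumption is already present in the paper's proof of \cref{claim:inapx_pw_helper}, so this is not a new gap.
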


        Consequently, due to \cref{claim:xnlp_demands_per_choice_gadget},
        it follows that exactly $|S|$ paths of $\mathcal{P}$ contain vertices of $S$.
        Moreover, in each choice gadget $\hat{C}_{i,j}$,
        exactly $2$ demands are routed, for a total of $2rk$ demands of $\M_{c}$,
        leaving unused only the vertices of $\hat{P}^{i,j}_p$ and $\hat{R}^{i,j}_p$ for some $p \in [n]$.
        Next we prove that any path not routing a demand of $\M_c$ contains exactly $3$ vertices.

        \begin{claim}\label{claim:xnlp_size_of_path}
            If $P \in \mathcal{P} \setminus \mathcal{P}_c$, then $P$ is of size $3$.
        \end{claim}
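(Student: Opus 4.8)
The plan is to show that such a path $P$, which by definition routes an adjacency demand of $\M \setminus \M_c$ rather than a demand of $\M_c$, is forced to be as short as possible by combining two facts already established: that each path of $\mathcal{P} \setminus \mathcal{P}_c$ contains \emph{exactly one} adjacency vertex (the $|S|$ such paths are vertex-disjoint and collectively meet only the $|S|$ vertices of $S$, so each meets precisely one), and that \cref{claim:xnlp_demands_per_choice_gadget} pins down the free vertices of every choice gadget to a single $\hat{P}^{i,j}_{p}$ together with $\hat{R}^{i,j}_{p}$.

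First I would fix the demand routed by $P$, say $(a,b)$ with $a = w^{i,j_1,p}_{j_2}$ and $b = w^{i,j_2,q}_{j_1}$ lying in the distinct choice gadgets $\hat{C}_{i,j_1}$ and $\hat{C}_{i,j_2}$; the inter-level case routed through some $e^{i,i+1}_{j,j'}$ is handled identically, only with different indices. The single adjacency vertex contained in $P$, call it $e$, must then be $e^i_{j_1,j_2}$, since this is the only adjacency vertex whose neighbourhood meets both $a$ and $b$. Because the only edges joining two distinct choice gadgets run through adjacency vertices, deleting $e$ from $P$ separates the remaining vertices into one simple subpath $P_A$ lying inside $\hat{C}_{i,j_1}$ and one simple subpath $P_B$ lying inside $\hat{C}_{i,j_2}$.

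The crux is to show that each of $P_A, P_B$ is a single vertex. Focusing on $P_A$: it begins at the endpoint $a$ of $P$ and must end at a neighbour of $e$ inside $\hat{C}_{i,j_1}$. The neighbours of $e^i_{j_1,j_2}$ in this gadget are exactly the vertices $w^{i,j_1,p'}_{j_2}$ over $p' \in [n]$, one per path $\hat{P}^{i,j_1}_{p'}$. By \cref{claim:xnlp_demands_per_choice_gadget}, the only vertices of $\hat{C}_{i,j_1}$ left unused by the $\M_c$-paths are those of a single $\hat{P}^{i,j_1}_{p^*}$ together with $\hat{R}^{i,j_1}_{p^*}$; since $P$ is vertex-disjoint from those paths, both $a$ and every vertex of $P_A$ must lie in this free set, which forces $p = p^*$ and shows that $a$ is the unique free neighbour of $e$ in the gadget. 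A simple subpath beginning at $a$ and ending at a neighbour of $e$ that is again $a$ can only be the trivial one-vertex path, so $P_A = \{a\}$; the symmetric argument gives $P_B = \{b\}$. Hence $P$ consists of exactly the three vertices $a$, $e$, $b$.

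I expect the main obstacle to be precisely this last step, namely ruling out that $P$ detours through the surviving free vertices of $\hat{P}^{i,j_1}_{p^*}$ or $\hat{R}^{i,j_1}_{p^*}$ before reaching $e$. The clean resolution is the uniqueness of the free neighbour of $e$ within each incident gadget: this follows from the disjoint-neighbourhood property of the adjacency vertices together with the single surviving index $p^*$ per gadget guaranteed by \cref{claim:xnlp_demands_per_choice_gadget}, and it makes any such detour impossible for a simple path. The remaining verification is merely bookkeeping to confirm that the same indices and counts go through verbatim for the inter-level adjacency vertices $e^{i,i+1}_{j,j'}$.
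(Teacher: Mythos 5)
Your proof is correct and follows the same overall skeleton as the paper's: isolate the unique adjacency vertex on $P$ (via the counting argument that the $|S|$ vertex-disjoint paths of $\mathcal{P}\setminus\mathcal{P}_c$ each meet $S$ exactly once), observe that $P$'s endpoints lie in the two choice gadgets incident to that vertex, and conclude that $P$ is just endpoint--adjacency vertex--endpoint. The one place you diverge is the final step: the paper gets there by noting that, by construction, \emph{both} endpoints of any demand in $\M\setminus\M_c$ are neighbours of the adjacency vertex $s$, and then implicitly invokes the standing assumption (stated at the top of \cref{lem:xnlp_cor2}) that solution paths are induced, so $s$ must be simultaneously the second and second-to-last vertex of $P$. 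You instead rule out detours inside each gadget by combining \cref{claim:xnlp_demands_per_choice_gadget} with the fact that each adjacency vertex has exactly one neighbour per path $\hat{P}^{i,j}_{p'}$, so only one of its neighbours in each incident gadget survives the $\M_c$-routing and that neighbour must be the demand endpoint itself. Your version is slightly more self-contained (it does not lean on the induced-path normalisation), at the cost of a little extra bookkeeping; both arguments are sound.
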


        \begin{claimproof}
            Since $P \notin \mathcal{P}_c$, it contains a single vertex of $S$, denoted by $s$.
            Let $\hat{C}_{i_1,j_1}$ and $\hat{C}_{i_2,j_2}$ denote the $2$ choice gadgets $s$ has neighbors in.
            Set $S' = S \setminus \braces{s}$.
            Then, it holds that $P$ routes one demand of $\M \setminus \M_{c}$, using the vertices of the graph $H - S'$.
            In that case, one endpoint of such a demand can only be a vertex of $\hat{C}_{i_1,j_1}$ and the other of $\hat{C}_{i_2,j_2}$,
            and due to the construction of $(H,\M,\ell)$, it holds that for every such demand, both its endpoints are neighbors of $s$.
        \end{claimproof}


        Let $\mathcal{W} \subseteq V(G)$ be a set of cardinality $rk$, containing vertex $v^{i,j}_{s(i,j)} \in V_i$ if,
        for choice gadget $\hat{C}_{i,j}$,
        it holds that the vertices of $\hat{P}^{i,j}_{s(i,j)}$ are not used to route demands of $\M_{c}$.
        Notice that $|\setdef{w \in W \cap V_i}{f(w) = j}| = 1$, for all $i \in [r]$ and $j \in [k]$.
        We will prove that $\mathcal{W}$ is a chained multicolored clique of $G$.

        Let $i_1, i_2, j_1$ and $j_2$ such that (i) either $i_1 = i_2$ and $\braces{j_1,j_2} \in \binom{[k]}{2}$,
        (ii) or $i_2 = i_1 + 1$ and $j_1, j_2 \in [k]$.
        Let $v^{i_1,j_1}_{s(i_1,j_1)}, v^{i_2,j_2}_{s(i_2,j_2)}$ belong to $\mathcal{W}$.
        Let $P \in \mathcal{P}$ denote the path containing vertex $e$,
        where $e = e^{i_1}_{j_1,j_2}$ in case (i) and $e = e^{i_1,i_2}_{j_1,j_2}$ otherwise.
        Due to \cref{claim:xnlp_size_of_path}, it holds that $P$ is comprised of $e$,
        as well as two of its neighbors, one in $\hat{C}_{i_1,j_1}$ and one in $\hat{C}_{i_2,j_2}$.
        Since the only neighbors of $e$ that are not used by paths in $\mathcal{P}_{c}$ are
        $w^{i_1,j_1,s(i_1,j_1)}_{c_1 + j_2}$ and $w^{i_2,j_2,s(i_2,j_2)}_{c_2 + j_1}$,
        we infer that $(w^{i_1,j_1,s(i_1,j_1)}_{c_1 + j_2}, w^{i_2,j_2,s(i_2,j_2)}_{c_2 + j_1}) \in \M$,
        where $c_1 = c_2 = 0$ in case (i) and $c_1 = 2k$, $c_2 = k$ otherwise.
        Consequently, $\braces{v^{i_1,j_1}_{s(i_1,j_1)}, v^{i_2,j_2}_{s(i_2,j_2)}} \in E(G)$.
        Since this holds for any two such vertices belonging to $\mathcal{W}$, it follows that $G$ has a chained multicolored clique.
    \end{nestedproof}

    Therefore, in polynomial time and with logarithmic space,
    we can construct a graph $H$,
    where $\pw = \bO(k^2)$ due to \cref{lem:xnlp_pw},
    as well as a set of pairs $\M$,
    such that, due to \cref{lem:xnlp_cor1,lem:xnlp_cor2},
    deciding whether at least $\ell$ pairs of $\M$ can be routed 
    is equivalent to deciding whether $G$ has a chained multicolored clique.
\end{proof}



\section{Refining Hardness for Bounded Tree-depth Graphs}\label{sec:td_lb}

In this section, we refine the hardness result of
Ene, Mnich, Pilipczuk, and Risteski~\cite{swat/EneMPR16} for bounded tree-depth graphs,
by employing a recursive structure introduced in~\cite{toct/LampisV24}.
The reduction of~\cite{swat/EneMPR16} starts from an instance $(G,k)$ of \kMC,
and produces an equivalent instance of {\mNDP} on a graph of tree-depth, vertex integrity, and feedback vertex number $\bO(k^2)$, implying a $n^{o(\sqrt{\td})}$ lower bound under the ETH.
We refine their approach, resulting in a reduction that keeps tree-depth linear in $k$,
thereby improving the lower bound to $n^{o(\td)}$ under the ETH.
As a consequence of our result, it follows that the standard $n^{\bO(\tw)}$ algorithm for the problem is optimal,
even if one considers the class of graphs of bounded tree-depth.

In order to achieve this result, we combine ideas from both~\cite{swat/EneMPR16} and~\cite{toct/LampisV24}.
On a high level, the reduction of~\cite{swat/EneMPR16} consists of $k$ choice gadgets,
each of which is used to encode the vertex which is chosen to take part in a supposed clique per color class.
Afterwards, it suffices to add $\binom{k}{2}$ vertices in order to verify the existence of edges among
all the chosen vertices of the color classes.
The deletion of those $\binom{k}{2}$ vertices then gives the bounds for the tree-depth of the graph.
In order to avoid this quadratic dependence,
we make use of a recursive structure introduced in~\cite{toct/LampisV24}
meant to verify the existence of edges between
the chosen vertices, while keeping the tree-depth of the resulting graph linear in $k$.

\begin{theorem}\label{thm:td_lb}
    For any computable function $f$, if there exists an algorithm that solves
    {\mNDP} in time $f(\td) n^{o(\td)}$, where $\td$ denotes the tree-depth
    of the input graph, then the ETH is false.
\end{theorem}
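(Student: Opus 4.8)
The plan is to reduce from \kMC, which by the preliminaries admits no $f(k) n^{o(k)}$ algorithm under the ETH, and to produce in polynomial time an equivalent instance $(H, \M, \ell)$ of \mNDP whose tree-depth is $\bO(k)$ rather than the $\bO(k^2)$ of~\cite{swat/EneMPR16}. First I would reuse the \emph{choice gadget} philosophy of \cref{thm:inapx_pw}: attach to each color class $V_i$ a gadget in which routing two fixed demands from $\M_c$ forces all but one of its candidate sub-paths to be consumed, so that the single free sub-path $\hat{P}^{i}_{h(i)}$ encodes the choice $h(i)\in[n]$ of a clique vertex from $V_i$. As before, for each relevant pair of chosen vertices we route one extra demand through an adjacency vertex whose neighbors certify the corresponding edge of $G$, and we calibrate $\ell = 2k + \binom{k}{2}$ so that exactly this many demands are routable iff the $k$ chosen vertices form a clique.

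The departure from~\cite{swat/EneMPR16} --- and the crux of the whole proof --- is \emph{how} the $\binom{k}{2}$ adjacency checks are wired together. Using $\binom{k}{2}$ independent verification vertices is precisely what forces tree-depth $\Theta(k^2)$, since deleting exactly those vertices is what disconnects the graph into small pieces. Instead I would organize the verification via the recursive scheme of~\cite{toct/LampisV24}: split the color classes into two halves $A,B$ of size $k/2$, and introduce an \emph{interface layer} of only $\bO(k)$ vertices that resolves the cross-adjacencies between $A$ and $B$, after whose removal the graph falls apart into two independent sub-instances, each essentially a copy of the construction on $k/2$ classes. The recursion tree then has depth $\bO(\log k)$, but each level contributes only $\bO(k)$ separator vertices to an elimination forest.

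For correctness I would argue both directions as in \cref{lem:xnlp_cor1,lem:xnlp_cor2}: a $k$-clique of $G$ yields the intended $\ell$ vertex-disjoint paths, and conversely any family of $\ell$ paths must --- by an analogue of \cref{claim:inapx_pw_helper} --- route exactly two demands in every choice gadget and saturate every adjacency vertex, pinning down a function $h\colon[k]\to[n]$ whose image is forced to be a clique. The genuinely new verification step is to show that the recursive interface layers admit \emph{no} spurious routing, i.e.\ that saturating all adjacency vertices across the recursion is possible only when a single consistent choice is respected in every gadget.

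To bound the parameter I would set up the recurrence $\td(H_k) \le c\cdot k + \td(H_{k/2})$, where $H_k$ denotes the instance on $k$ classes and $c\cdot k$ counts the interface layer whose deletion splits $H_k$ into two independent copies of $H_{k/2}$; this telescopes via $\sum_{i\ge 0} k/2^i = \bO(k)$ to $\td(H_k) = \bO(k)$. Finally, since $|V(H)| = n^{\bO(1)}$ and $\td(H) = \bO(k)$, any $f(\td)\,|V(H)|^{o(\td)}$ algorithm for \mNDP would decide \kMC{} in time $f(\bO(k)) \cdot \bigl(n^{\bO(1)}\bigr)^{o(k)} = f'(k)\, n^{o(k)}$, contradicting the ETH. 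I expect the main obstacle to be exactly the design and faithful analysis of the recursive interface layer: guaranteeing simultaneously that a single layer of $\bO(k)$ vertices can encode all of the quadratically-many cross-adjacencies between $A$ and $B$, that it forbids cheating solutions, and that its removal cleanly isolates the two half-sized sub-instances --- this tension between quadratically-many constraints and linearly-many separators is where the leverage of the recursive structure of~\cite{toct/LampisV24} is essential.
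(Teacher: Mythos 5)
There is a genuine gap, and it sits exactly where you suspect it does: the ``interface layer'' you describe cannot exist in the form you propose. In this vertex-disjoint-paths framework, each adjacency check between a chosen vertex of class $i$ and a chosen vertex of class $j$ is certified by routing one dedicated demand through a vertex whose neighborhood encodes $E^{i,j}$; since a vertex can lie on at most one path of the solution, one such vertex certifies one pair. A binary split of the color classes into halves $A,B$ with \emph{all} cross-adjacencies between $A$ and $B$ resolved inside a single separator therefore forces that separator to contain $\Theta((k/2)^2)=\Theta(k^2)$ vertices, and you are back to tree-depth $\Theta(k^2)$; with only $\bO(k)$ interface vertices you can check only $\bO(k)$ of the $\Theta(k^2)$ cross-pairs. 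So the recurrence $\td(H_k)\le ck+\td(H_{k/2})$ is not available for the structure you describe.

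The paper's construction resolves this tension differently: it does \emph{not} reduce the number of validation vertices (there remain $\Theta(k^2)$ of them, which is why $\fvs(H)$ stays $\bO(k^2)$), but it recurses on \emph{pairs of intervals} of color classes rather than on the classes themselves. A gadget $\hat{A}(i_1,i_2,i_1',i_2')$ is responsible for all cross-pairs between $[i_1,i_2]$ and $[i_1',i_2']$; it introduces only $\bO(i_2-i_1)$ \emph{copy} vertices, whose job is merely to propagate each class's choice into fresh copies of its choice gadget, and it delegates the actual adjacency checks to four children on half-length interval pairs. All checking happens at the $\Theta(k^2)$ leaves $\hat{A}(i,i,i',i')$, each with one validation vertex. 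Deleting the $\bO(\kappa)$ copy vertices at a node disconnects the four children from each other and from the original choice gadgets, and since tree-depth of a disjoint union is the \emph{maximum} over components (not the sum over the $4^{\log k}$ leaves), the recurrence $T(\kappa)\le 4\kappa+T(\kappa/2)$ telescopes to $\bO(k)$. Two further consequences you would need to absorb: each class's choice gadget is instantiated $\Theta(k)$ times and kept consistent via the copy gadgets (a consistency argument your proposal does not anticipate), and the target value is not $2k+\binom{k}{2}$ but $\gamma(n-1)+\delta$ with $\gamma,\delta=\Theta(k^2)$, reflecting the many gadget copies. The rest of your outline (reduction source, correctness in both directions, the final ETH calculation) matches the paper.
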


\begin{proof}
    Let $(G,k)$ be an instance of \kMC,
    such that every vertex of $G$ has a self loop,
    i.e., $\braces{v,v} \in E(G)$, for all $v \in V(G)$.
    Recall that we assume that~$G$ is given to us partitioned into $k$ independent sets $V_1, \ldots, V_k$,
    where $V_i = \braces{v^i_1, \ldots, v^i_n}$.
    Assume without loss of generality that $k = 2^z$, for some $z \in \N$
    (one can do so by adding dummy independent sets connected to all the other vertices of the graph).
    Moreover, let $E^{i_1,i_2} \subseteq E(G)$ denote the edges of $G$ with one endpoint in $V_{i_1}$ and the other in $V_{i_2}$.
    We will construct in polynomial time an equivalent instance $(H, \M, \ell)$ of \mNDP,
    where $H$ is a graph of tree-depth $\td(H) = \bO(k)$,
    feedback vertex number $\fvs(H) = \bO(k^2)$,
    and size $|V(H)| = n^{\bO(1)}$,
    $\M \subseteq \binom{V(H)}{2}$ is a set of demands,
    and $\ell$ is an integer,
    such that $G$ has a $k$-clique if and only if at least $\ell$ demands of $\M$ can be routed in $H$.

    \proofsubparagraph*{Choice Gadget.}
    For an independent set $V_i$, we construct the \emph{choice gadget} $\hat{C}_i$ as depicted in \cref{fig:td_lb_choice_gadget}.
    We first construct paths on vertex sets $\hat{P}^i_j = \braces{v^{i,j}_1, v^{i,j}_2, v^{i,j}_3}$, where $j \in [n]$.
    Afterwards, for every $j \in [2,n]$,
    we introduce vertices $\alpha^i_j$ and $\beta^i_j$, connecting them with $v^{i,1}_1, v^{i,j}_1$ and $v^{i,1}_3, v^{i,j}_3$
    respectively.
    Moreover, we add the pair $(\alpha^i_j, \beta^i_j)$ to $\M$.
    Intuitively, we will consider a one-to-one mapping between the vertex $v^i_j$ of $V_i$ belonging to a supposed $k$-clique of $G$ and
    the vertices of $\hat{P}^i_j$ not being used to route any of the demands added in this step.
    
    \proofsubparagraph*{Copy Gadget.}
    Given two instances $\mathcal{I}_1$, $\mathcal{I}_2$ of a choice gadget $\hat{C}_i$,
    when we say that we add a \emph{copy gadget $(\mathcal{I}_1, \mathcal{I}_2, t)$}, where $t \in \braces{1,2}$,
    we introduce a \emph{copy vertex} $g$, connect it with all vertices $v^{i,j}_{t+1}$ of $\mathcal{I}_1$ and all vertices $v^{i,j}_1$ of $\mathcal{I}_2$ for $j \in [n]$
    and add all the pairs $(v^{i,j}_{t+1}, v^{i,j}_1)$ to~$\M$.

    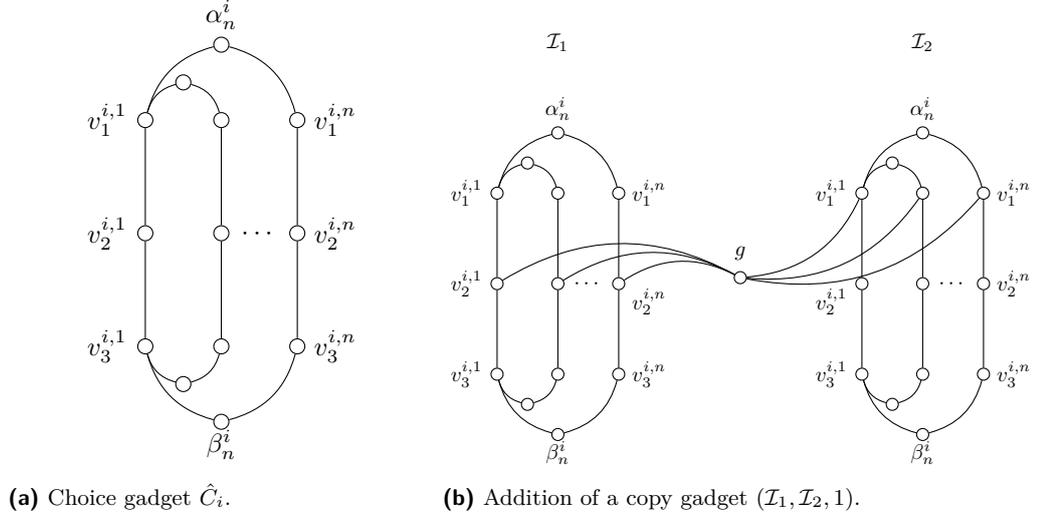
\begin{figure}[ht]
        \centering 
          \begin{subfigure}[b]{0.4\textwidth}
          \centering
            \begin{tikzpicture}[scale=0.7, transform shape]
                \node[vertex] (r1) at (5,6) {};
                \node[] () at (4.5,6) {$v^{i,1}_1$};
                \node[vertex] (t11) at (5,4.5) {};
                \node[] () at (4.5,4.5) {$v^{i,1}_2$};
                \node[vertex] (t12) at (5,3) {};
                \node[] () at (4.5,3) {$v^{i,1}_3$};
                \draw[] (r1)--(t11)--(t12);
                
                \begin{scope}[shift={(1,0)}]
                    \node[vertex] (r2) at (5,6) {};
                    \node[vertex] (t21) at (5,4.5) {};
                    \node[vertex] (t22) at (5,3) {};
                    \draw[] (r2)--(t21)--(t22);
                \end{scope}
                
                \begin{scope}[shift={(0.5,0)}]
                    \node[vertex] (a1) at (5,6.5) {};
                    \node[vertex] (b1) at (5,2.5) {};
                    \draw[] (a1) edge [bend right] (r1);
                    \draw[] (a1) edge [bend left] (r2);
                    \draw[] (b1) edge [bend right] (t22);
                    \draw[] (b1) edge [bend left] (t12);
                \end{scope}
                
                \node[] () at (6.5,4.5) {$\cdots$};
                
                \begin{scope}[shift={(2,0)}]
                    \node[vertex] (rn) at (5,6) {};
                    \node[] () at (5.5,6) {$v^{i,n}_1$};
                    \node[vertex] (t1n) at (5,4.5) {};
                    \node[] () at (5.5,4.5) {$v^{i,n}_2$};
                    \node[vertex] (t2n) at (5,3) {};
                    \node[] () at (5.5,3) {$v^{i,n}_3$};
                    \draw[] (rn)--(t1n)--(t2n);
                \end{scope}
                
                \begin{scope}[shift={(1,0)}]
                    \node[vertex] (an) at (5,7) {};
                    \node[] () at (5,7.4) {$\alpha^i_n$};
                    \node[vertex] (bn) at (5,2) {};
                    \node[] () at (5,1.7) {$\beta^i_n$};
                    \draw[] (an) edge [bend right] (r1);
                    \draw[] (an) edge [bend left] (rn);
                    \draw[] (bn) edge [bend right] (t2n);
                    \draw[] (bn) edge [bend left] (t12);
                \end{scope}
            \end{tikzpicture}
            \caption{Choice gadget $\hat{C}_i$.}
            \label{fig:td_lb_choice_gadget}
          \end{subfigure}
        \begin{subfigure}[b]{0.4\linewidth}
        \centering
            \begin{tikzpicture}[scale=0.65, transform shape]
        
                \node[vertex] (r1) at (5,6) {};
                \node[] () at (4.5,6) {$v^{i,1}_1$};
                \node[vertex] (t11) at (5,4.5) {};
                \node[] () at (4.5,4.5) {$v^{i,1}_2$};
                \node[vertex] (t12) at (5,3) {};
                \node[] () at (4.5,3) {$v^{i,1}_3$};
                \draw[] (r1)--(t11)--(t12);
                
                \begin{scope}[shift={(1,0)}]
                    \node[vertex] (r2) at (5,6) {};
                    \node[vertex] (t21) at (5,4.5) {};
                    \node[vertex] (t22) at (5,3) {};
                    \draw[] (r2)--(t21)--(t22);
                \end{scope}
                
                \begin{scope}[shift={(0.5,0)}]
                    \node[vertex] (a1) at (5,6.5) {};
                    \node[vertex] (b1) at (5,2.5) {};
                    \draw[] (a1) edge [bend right] (r1);
                    \draw[] (a1) edge [bend left] (r2);
                    \draw[] (b1) edge [bend right] (t22);
                    \draw[] (b1) edge [bend left] (t12);
                \end{scope}
                
                \node[] () at (6.5,4.5) {$\cdots$};
                
                \begin{scope}[shift={(2,0)}]
                    \node[vertex] (rn) at (5,6) {};
                    \node[] () at (5.5,6) {$v^{i,n}_1$};
                    \node[vertex] (t1n) at (5,4.5) {};
                    \node[] () at (5.5,4.2) {$v^{i,n}_2$};
                    \node[vertex] (t2n) at (5,3) {};
                    \node[] () at (5.5,3) {$v^{i,n}_3$};
                    \draw[] (rn)--(t1n)--(t2n);
                \end{scope}
                
                \begin{scope}[shift={(1,0)}]
                    \node[vertex] (an) at (5,7) {};
                    \node[] () at (5,7.4) {$\alpha^i_n$};
                    \node[] () at (5,8.5) {$\mathcal{I}_1$};
                    \node[vertex] (bn) at (5,2) {};
                    \node[] () at (5,1.7) {$\beta^i_n$};
                    \draw[] (an) edge [bend right] (r1);
                    \draw[] (an) edge [bend left] (rn);
                    \draw[] (bn) edge [bend right] (t2n);
                    \draw[] (bn) edge [bend left] (t12);
                \end{scope}
        
                \begin{scope}[shift={(6,0)}]
                    \node[vertex] (Ar1) at (5,6) {};
                    \node[] () at (4.5,6) {$v^{i,1}_1$};
                    \node[vertex] (At11) at (5,4.5) {};
                    \node[] () at (4.5,4.2) {$v^{i,1}_2$};
                    \node[vertex] (At12) at (5,3) {};
                    \node[] () at (4.5,3) {$v^{i,1}_3$};
                    \draw[] (Ar1)--(At11)--(At12);
                    
                    \begin{scope}[shift={(1,0)}]
                        \node[vertex] (Ar2) at (5,6) {};
                        \node[vertex] (At21) at (5,4.5) {};
                        \node[vertex] (At22) at (5,3) {};
                        \draw[] (Ar2)--(At21)--(At22);
                    \end{scope}
                    
                    \begin{scope}[shift={(0.5,0)}]
                        \node[vertex] (Aa1) at (5,6.5) {};
                        \node[vertex] (Ab1) at (5,2.5) {};
                        \draw[] (Aa1) edge [bend right] (Ar1);
                        \draw[] (Aa1) edge [bend left] (Ar2);
                        \draw[] (Ab1) edge [bend right] (At22);
                        \draw[] (Ab1) edge [bend left] (At12);
                    \end{scope}
                    
                    \node[] () at (6.5,4.5) {$\cdots$};
                    
                    \begin{scope}[shift={(2,0)}]
                        \node[vertex] (Arn) at (5,6) {};
                        \node[] () at (5.5,6) {$v^{i,n}_1$};
                        \node[vertex] (At1n) at (5,4.5) {};
                        \node[] () at (5.5,4.5) {$v^{i,n}_2$};
                        \node[vertex] (At2n) at (5,3) {};
                        \node[] () at (5.5,3) {$v^{i,n}_3$};
                        \draw[] (Arn)--(At1n)--(At2n);
                    \end{scope}
                    
                    \begin{scope}[shift={(1,0)}]
                        \node[vertex] (Aan) at (5,7) {};
                        \node[] () at (5,8.5) {$\mathcal{I}_2$};
                        \node[] () at (5,7.4) {$\alpha^i_n$};
                        \node[vertex] (Abn) at (5,2) {};
                        \node[] () at (5,1.7) {$\beta^i_n$};
                        \draw[] (Aan) edge [bend right] (Ar1);
                        \draw[] (Aan) edge [bend left] (Arn);
                        \draw[] (Abn) edge [bend right] (At2n);
                        \draw[] (Abn) edge [bend left] (At12);
                    \end{scope}
                \end{scope}

                \node[vertex] (g) at (9,4.6) {};
                \node[] () at (9,5) {$g$};
                \draw[] (t11) edge [bend left] (g);
                \draw[] (t21) edge [bend left] (g);
                \draw[] (t1n) edge [bend left] (g);
                \draw[] (Ar1) edge [bend left] (g);
                \draw[] (Ar2) edge [bend left] (g);
                \draw[] (Arn) edge [bend left] (g);

            \end{tikzpicture}
            \caption{Addition of a copy gadget $(\mathcal{I}_1, \mathcal{I}_2, 1)$.}
            \label{fig:td_lb_copy_choice_gadget}
          \end{subfigure}
        \caption{Choice gadgets and how to copy them.}
    \end{figure}

    \proofsubparagraph*{Adjacency Gadget.}
    Let $i_1, i_2, i'_1, i'_2 \in [k]$.
    For $i_1 \leq i_2$ and $i'_1 \leq i'_2$, we define the \emph{adjacency gadget}%
    \footnote{For some high-level intuition regarding the adjacency gadgets,
    we refer the reader to~\cite[Section~4]{toct/LampisV24}.}
    $\hat{A}(i_1, i_2, i'_1, i'_2)$ as follows:
    \begin{itemize}
        \item Consider first the case when $i_1 = i_2 = i$ and $i'_1 = i'_2 = i'$.
        Let the adjacency gadget contain instances of the choice gadgets $\hat{C}_{i}$ and $\hat{C}_{i'}$,
        as well as a \emph{validation vertex} $m_{i,i'}$.
        Add edges between $m_{i,i'}$ and all vertices $v^{i,j}_2$ and $v^{i',j}_2$ for $j \in [n]$.
        If $e = \braces{v^{i}_{j}, v^{i'}_{j'}} \in E^{i,i'}$, then add the pair $(v^{i,j}_2, v^{i',j'}_2)$ to $\M$.
    
        \item Now consider the case when $i_1 < i_2$ and $i'_1 < i'_2$.
        Then, let $\hat{A}(i_1,i_2,i'_1,i'_2)$ contain instances of choice gadgets $\hat{C}_{i}$ and $\hat{C}_{i'}$, where $i \in [i_1,i_2]$ and $i' \in [i'_1, i'_2]$,
        which we will refer to as the \emph{original choice gadgets} of $\hat{A}(i_1,i_2,i'_1,i'_2)$,
        as well as the adjacency gadgets
        \begin{multicols}{2}
        \begin{itemize}
            \item $\hat{A}\parens*{i_1, \floor*{\frac{i_1+i_2}{2}}, i'_1, \floor*{\frac{i'_1+i'_2}{2}}}$,
            \item $\hat{A}\parens*{i_1, \floor*{\frac{i_1+i_2}{2}}, \ceil*{\frac{i'_1+i'_2}{2}}, i'_2}$,
            \item $\hat{A}\parens*{\ceil*{\frac{i_1+i_2}{2}}, i_2, i'_1, \floor*{\frac{i'_1+i'_2}{2}}}$,
            \item $\hat{A}\parens*{\ceil*{\frac{i_1+i_2}{2}}, i_2, \ceil*{\frac{i'_1+i'_2}{2}}, i'_2}$.
        \end{itemize}
        \end{multicols}
        Lastly, let $\mathcal{I}$ denote the original choice gadget $\hat{C}_p$, where $p \in [i_1, i_2] \cup [i'_1, i'_2]$.
        Notice that there are exactly two instances of choice gadget $\hat{C}_p$ appearing as original choice gadgets in the adjacency gadgets just introduced,
        say instances $\mathcal{I}_1$ and $\mathcal{I}_2$.
        Add copy gadgets $(\mathcal{I}, \mathcal{I}_1, 1)$ and $(\mathcal{I}, \mathcal{I}_2, 2)$.
    \end{itemize}
        
    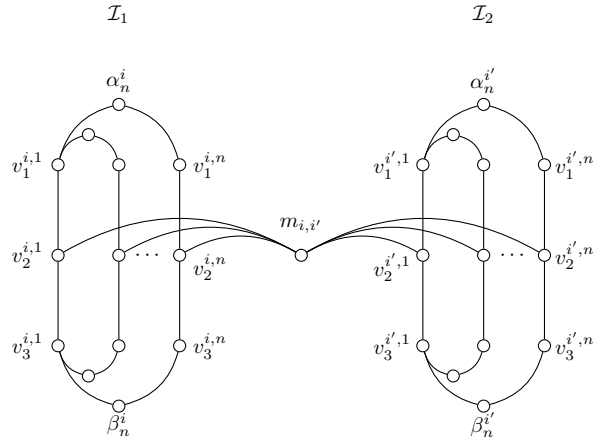
\begin{figure}[ht]
    \centering
        \begin{tikzpicture}[scale=0.7, transform shape]
    
            \node[vertex] (r1) at (5,6) {};
            \node[] () at (4.5,6) {$v^{i,1}_1$};
            \node[vertex] (t11) at (5,4.5) {};
            \node[] () at (4.5,4.5) {$v^{i,1}_2$};
            \node[vertex] (t12) at (5,3) {};
            \node[] () at (4.5,3) {$v^{i,1}_3$};
            \draw[] (r1)--(t11)--(t12);
            
            \begin{scope}[shift={(1,0)}]
                \node[vertex] (r2) at (5,6) {};
                \node[vertex] (t21) at (5,4.5) {};
                \node[vertex] (t22) at (5,3) {};
                \draw[] (r2)--(t21)--(t22);
            \end{scope}
            
            \begin{scope}[shift={(0.5,0)}]
                \node[vertex] (a1) at (5,6.5) {};
                \node[vertex] (b1) at (5,2.5) {};
                \draw[] (a1) edge [bend right] (r1);
                \draw[] (a1) edge [bend left] (r2);
                \draw[] (b1) edge [bend right] (t22);
                \draw[] (b1) edge [bend left] (t12);
            \end{scope}
            
            \node[] () at (6.5,4.5) {$\cdots$};
            
            \begin{scope}[shift={(2,0)}]
                \node[vertex] (rn) at (5,6) {};
                \node[] () at (5.5,6) {$v^{i,n}_1$};
                \node[vertex] (t1n) at (5,4.5) {};
                \node[] () at (5.5,4.3) {$v^{i,n}_2$};
                \node[vertex] (t2n) at (5,3) {};
                \node[] () at (5.5,3) {$v^{i,n}_3$};
                \draw[] (rn)--(t1n)--(t2n);
            \end{scope}
            
            \begin{scope}[shift={(1,0)}]
                \node[vertex] (an) at (5,7) {};
                \node[] () at (5,7.4) {$\alpha^{i}_n$};
                \node[] () at (5,8.5) {$\mathcal{I}_1$};
                \node[vertex] (bn) at (5,2) {};
                \node[] () at (5,1.7) {$\beta^{i}_n$};
                \draw[] (an) edge [bend right] (r1);
                \draw[] (an) edge [bend left] (rn);
                \draw[] (bn) edge [bend right] (t2n);
                \draw[] (bn) edge [bend left] (t12);
            \end{scope}
    
            \begin{scope}[shift={(6,0)}]
                \node[vertex] (Ar1) at (5,6) {};
                \node[] () at (4.5,6) {$v^{i',1}_1$};
                \node[vertex] (At11) at (5,4.5) {};
                \node[] () at (4.5,4.3) {$v^{i',1}_2$};
                \node[vertex] (At12) at (5,3) {};
                \node[] () at (4.5,3) {$v^{i',1}_3$};
                \draw[] (Ar1)--(At11)--(At12);
                
                \begin{scope}[shift={(1,0)}]
                    \node[vertex] (Ar2) at (5,6) {};
                    \node[vertex] (At21) at (5,4.5) {};
                    \node[vertex] (At22) at (5,3) {};
                    \draw[] (Ar2)--(At21)--(At22);
                \end{scope}
                
                \begin{scope}[shift={(0.5,0)}]
                    \node[vertex] (Aa1) at (5,6.5) {};
                    \node[vertex] (Ab1) at (5,2.5) {};
                    \draw[] (Aa1) edge [bend right] (Ar1);
                    \draw[] (Aa1) edge [bend left] (Ar2);
                    \draw[] (Ab1) edge [bend right] (At22);
                    \draw[] (Ab1) edge [bend left] (At12);
                \end{scope}
                
                \node[] () at (6.5,4.5) {$\cdots$};
                
                \begin{scope}[shift={(2,0)}]
                    \node[vertex] (Arn) at (5,6) {};
                    \node[] () at (5.5,6) {$v^{i',n}_1$};
                    \node[vertex] (At1n) at (5,4.5) {};
                    \node[] () at (5.5,4.5) {$v^{i',n}_2$};
                    \node[vertex] (At2n) at (5,3) {};
                    \node[] () at (5.5,3) {$v^{i',n}_3$};
                    \draw[] (Arn)--(At1n)--(At2n);
                \end{scope}
                
                \begin{scope}[shift={(1,0)}]
                    \node[vertex] (Aan) at (5,7) {};
                    \node[] () at (5,8.5) {$\mathcal{I}_2$};
                    \node[] () at (5,7.4) {$\alpha^{i'}_n$};
                    \node[vertex] (Abn) at (5,2) {};
                    \node[] () at (5,1.7) {$\beta^{i'}_n$};
                    \draw[] (Aan) edge [bend right] (Ar1);
                    \draw[] (Aan) edge [bend left] (Arn);
                    \draw[] (Abn) edge [bend right] (At2n);
                    \draw[] (Abn) edge [bend left] (At12);
                \end{scope}
            \end{scope}

            \node[vertex] (g) at (9,4.5) {};
            \node[] () at (9,5) {$m_{i,i'}$};
            \draw[] (t11) edge [bend left] (g);
            \draw[] (t21) edge [bend left] (g);
            \draw[] (t1n) edge [bend left] (g);
            \draw[] (At11) edge [bend right] (g);
            \draw[] (At21) edge [bend right] (g);
            \draw[] (At1n) edge [bend right] (g);

        \end{tikzpicture}
    \caption{Adjacency gadget $\hat{A}(i,i,i',i')$.}
    \label{fig:td_lb_adj_gadget}
    \end{figure}
        
    Let graph $H$ be the adjacency gadget $\hat{A}(1,k,1,k)$ and set $\ell = \gamma \cdot (n-1) + \delta$,
    where $\gamma = 2k (2k - 1)$ and $\delta = 5k^2 - 4k$.
    Notice that it holds that $|V(H)| = (n \cdot k)^{\bO(1)}$ as well as that all copy and validation vertices have disjoint neighborhoods.
    Additionally, let $\M_{\alpha \beta}$ denote the set of pairs of $\M$ of the form $(\alpha^i_j, \beta^i_j)$.
    This concludes the construction of the instance $(H, \M, \ell)$.
    
    \begin{lemmarep}[\appsymb]\label{lem:td_lb_helper}
        $H$ has the following properties:
        \begin{itemize}
            \item $\gamma$ is equal to the number of instances of choice gadgets present in $H$,
            \item $\delta$ is equal to the number of copy and validation vertices present in $H$.        
        \end{itemize}
    \end{lemmarep}
    
    \begin{proof}
        First we will prove that for every adjacency gadget $\hat{A}(i_1,i_2,i'_1,i'_2)$ appearing in $H$,
        it holds that $i_2 - i_1 = i'_2 - i'_1 = 2^c - 1$, for some $c \in \N$.
        The statement holds for $\hat{A}(1,k,1,k)$, as well as when $i_2 - i_1 = i'_2 - i'_1 = 0$.
        Suppose that it holds for some $\hat{A}(i_1,i_2,i'_1,i'_2)$, i.e., $i_2 - i_1 = i'_2 - i'_1 = 2^c - 1 > 0$, for some $c \in \N$.
        Then, it follows that $\floor*{\frac{i_1+i_2}{2}} - i_1 = \floor*{i_2 - 2^{c-1} + 0.5} - i_1 = i_2 - i_1 - 2^{c-1} = 2^{c-1} - 1$.
        Moreover, it follows that $i_2 - \ceil*{\frac{i_1+i_2}{2}} = i_2 - \ceil*{i_1 + 2^{c-1} - 0.5} = i_2 - (i_1 + 2^{c-1}) = 2^{c-1} - 1$.
        Therefore, the stated property holds.
    
        In that case, for some $\hat{A}(i_1,i_2,i'_1,i'_2)$, in every step of the recursion,
        intervals $[i_1,i_2]$ and $[i'_1,i'_2]$ are partitioned in the middle,
        and an adjacency gadget is considered for each of the four combinations.
        In that case, starting from $\hat{A}(1,k,1,k)$, there is a single way to produce
        every adjacency gadget $\hat{A}(i_1,i_1,i_2,i_2)$, where $i_1,i_2 \in [k]$.
    
        For the first statement,
        notice that the number of instances of choice gadgets is given by the recursive formula $T_1(k) = 2k + 4T_1(k/2)$, where $T_1(1) = 2$.
        In that case, it follows that
        \[
            T_1(k) = \sum_{i=0}^{\log k} \parens*{4^i \cdot 2 \cdot \frac{k}{2^i}} =
            2k \sum_{i=0}^{\log k} 2^i =
            2k (2k - 1) =
            \gamma.
        \]
    
        For the second statement,
        notice that the number of copy plus the number of validation vertices is given by the recursive formula $T_2(k) = 4k + 4T_2(k/2)$, where $T_2(1) = 1$.
        Consequently,
        \begin{align*}        
            T_2(k) &= \sum_{i=0}^{\log k-1} \parens*{4^i \cdot 4 \cdot \frac{k}{2^i}} + 4^{\log k} =
            k^2 + 4k \sum_{i=0}^{\log k - 1} 2^i\\
            &= k^2 + 4k (k-1) =
            5k^2 - 4k = \delta,
        \end{align*}
        and the statement follows.
    \end{proof}

    \begin{lemmarep}[\appsymb]\label{lem:td_lb_td}
        It holds that $\td(H) = \bO(k)$ and $\fvs(H) = \bO(k^2)$.
    \end{lemmarep}
    
    \begin{proof}
        Let $T(\kappa)$ denote the tree-depth of $\hat{A}(i_1,i_2,i'_1,i'_2)$ in the case when $i_2 - i_1 = i'_2 - i'_1 = \kappa$.        
        First, notice that, for $i_1,i_2 \in [k]$, the tree-depth of $\hat{A}(i_1,i_1,i_2,i_2)$ is less than $8$.
        To see this, observe that by removing vertices $m_{i_1,i_2}, v^{i_1,1}_1,v^{i_1,1}_3,v^{i_2,1}_1$ and $v^{i_2,1}_3$,
        all remaining connected components are paths of at most $5$ vertices.
        Consequently, $T(1) \leq 8$.
    
        Now, consider the adjacency gadget $\hat{A}(i_1,i_2,i'_1,i'_2)$, where $i_2 - i_1 = i'_2 - i'_1 = \kappa$.
        This is comprised of adjacency gadgets
        \begin{multicols}{2}
            \begin{itemize}
                \item $\hat{A}\parens*{i_1, \floor*{\frac{i_1+i_2}{2}}, i'_1, \floor*{\frac{i'_1+i'_2}{2}}}$,
                \item $\hat{A}\parens*{i_1, \floor*{\frac{i_1+i_2}{2}}, \ceil*{\frac{i'_1+i'_2}{2}}, i'_2}$,
                \item $\hat{A}\parens*{\ceil*{\frac{i_1+i_2}{2}}, i_2, i'_1, \floor*{\frac{i'_1+i'_2}{2}}}$,
                \item $\hat{A}\parens*{\ceil*{\frac{i_1+i_2}{2}}, i_2, \ceil*{\frac{i'_1+i'_2}{2}}, i'_2}$,
            \end{itemize}
        \end{multicols}
        as well as of exactly $2\kappa$ original choice gadgets,
        each of which is connected with two copy gadgets to other instances of choice gadgets present in the adjacency gadgets.
        By removing all copy vertices $g$ (see \cref{fig:td_lb_copy_choice_gadget}),
        all the original choice gadgets as well as the adjacency gadgets are disconnected.
        Therefore, it holds that $T(\kappa) \leq 4 \kappa + T(\kappa / 2)$,
        thus, it follows that
        \[
            T(k) \leq 8 + 4 \sum_{i=0}^{\log k - 1} \frac{k}{2^i} = \bO(k).
        \]
    
        For the feedback vertex number,
        let $S \subseteq V(H)$ consist of all the copy and validation vertices.
        Due to \cref{lem:td_lb_helper}, it holds that $|S| = \delta = \bO(k^2)$,
        while $H - S$ is a collection of $\gamma$ disconnected choice gadgets.
        Let $S' = S \cup \setdef{v^{i,1}_1, v^{i,1}_3}{\hat{C}_i \text{ in } H}$,
        and notice that $H-S'$ is a collection of paths of length at most $5$,
        while $|S'| = \bO(k^2)$, thus the statement follows.
    \end{proof}

    \begin{lemmarep}[\appsymb]\label{lem:td_lb_cor1}
        If $G$ contains a $k$-clique,
        then $(H, \M, \ell)$ is a Yes instance of \mNDP.
    \end{lemmarep}
    
    \begin{proof}
        Consider $s \colon [k] \to [n]$ such that $\mathcal{V} = \setdef{v^i_{s(i)}}{i \in [k]} \subseteq V(G)$ is a $k$-clique of $G$.
        We construct a family of $\ell$ vertex-disjoint paths as follows.

        First, for every instance of $\hat{C}_i$, where $i \in [k]$, and every $j \in [2,n]$,
        we route a path from $\alpha^i_j$ to $\beta^i_j$ through the path $\hat{P}^i_j$ if $j \neq s(i)$;
        if $j = s(i)$, then we use the path $\hat{P}^i_1$ instead.
        Note that in this step we have created $\gamma \cdot (n-1)$ vertex-disjoint paths connecting terminal pairs,
        and in every gadget $\hat{C}_i$ the only unused vertices are vertices on the path $\hat{P}^i_{s(i)}$.
        
        Then, consider the adjacency gadget $\hat{A}(i,i,i',i')$, where $i, i' \in [k]$.
        For every such adjacency gadget, we take the $3$-vertex path from $v^{i,s(i)}_2$ to $v^{i',s(i')}_2$ through $m_{i,i'}$;
        note that the assumption that $\braces{v^i_{s(i)}, v^{i'}_{s(i')}} \in E(G)$ ensures that $(v^{i,s(i)}_2, v^{i',s(i')}_2) \in \M$.
        
        Lastly, let $(\mathcal{I}_1, \mathcal{I}_2, t)$ be a copy gadget,
        where $\mathcal{I}_1$, $\mathcal{I}_2$ are instances of $\hat{C}_i$ and $t \in \braces{1,2}$.
        For every such copy gadget, we take the $3$-vertex path from $v^{i,s(i)}_{t+1}$ of $\mathcal{I}_1$ to
        $v^{i,s(i)}_1$ of $\mathcal{I}_2$ through the copy vertex $g$ connecting them.

        Since the neighborhoods of all the copy and validation vertices are disjoint,
        this is a valid routing, and it follows that exactly $\gamma \cdot (n-1) + \delta = \ell$ demands are routed.
    \end{proof}

    \begin{lemmarep}[\appsymb]\label{lem:td_lb_cor2}
        If $(H, \M, \ell)$ is a Yes instance of \mNDP,
        then $G$ contains a $k$-clique.
    \end{lemmarep}
    
    \begin{proof}
        Let $\mathcal{P} = \braces{P_1, \ldots, P_\ell}$ be a set of $\ell$ vertex-disjoint paths connecting terminal pairs of~$\M$ in~$H$.
        Assume without loss of generality that said paths are simple,
        as well as that the only edges among vertices of the path appear between consecutive vertices.
        Moreover, let $\mathcal{P}_{\alpha \beta} \subseteq \mathcal{P}$ contain all the paths of $\mathcal{P}$
        that route demands of $\M_{\alpha \beta}$.
        Set $S$ to be the set consisting of all the copy and validation vertices in $H$, where $|S| = \delta$ due to \cref{lem:td_lb_helper}.

        Notice that at most $\delta$ paths of $\mathcal{P}$ contain vertices of $S$, consequently at least $\ell - \delta = \gamma \cdot (n-1)$ paths
        route demands using only vertices of $H-S$.
        Moreover, $H-S$ is a collection of $\gamma$ disconnected choice gadgets, each of which can be used to route at most $n-1$ demands.
        Consequently, it follows that exactly $\delta$ paths of $\mathcal{P}$ contain vertices of $S$,
        while in each choice gadget, exactly $n-1$ demands are routed using vertices of $H-S$,
        thus all the demands of~$\M_{\alpha \beta}$ are routed.
        Let $P_i$, for $i \in [\delta]$, denote the paths of $\mathcal{P} \setminus \mathcal{P}_{\alpha \beta}$, each of which contains exactly one vertex of $S$.

        Notice that by routing all $n-1$ demands $(\alpha^i_j, \beta^i_j)$ in a choice gadget $\hat{C}_i$, where $j \in [2,n]$,
        only the vertices of $\hat{P}^i_j$ remain unused, for some $j \in [n]$.
        To see why this property holds, notice that the shortest path connecting the terminals of such a demand is of length $5$, including said terminals.
        Therefore, at most $3$ vertices of a choice gadget may remain unused.
        However, if said vertices belonged to more than a single path $\hat{P}^i_j$, then at most $n-2$ demands can be routed,
        which is a contradiction.
        Since every such demand can be routed via either $\hat{P}^i_j$ or $\hat{P}^i_1$,
        there exist $n-1$ paths of $\mathcal{P}_{\alpha \beta}$ that route all demands associated with $\hat{C}_i$ and does not use the vertices of $\hat{P}^i_j$,
        for some $j \in [n]$.

        Next, we show that every $P_i$, for $i \in [\delta]$, involves exactly $3$ vertices,
        with one being a vertex of $S$ and the other two being its neighbors.
        Fix one such $i$, and let $s$ denote the copy or validation vertex of $S$ appearing in $P_i$,
        while $\mathcal{I}_1$ and $\mathcal{I}_2$ denote the choice gadgets that have vertices adjacent to $s$.
        Set $S' = S \setminus \braces{s}$.
        Then, it holds that $P_i$ routes one demand of $\M \setminus \M_{\alpha \beta}$, using the vertices of the graph $H - S'$.
        In that case, one endpoint of such a demand can only be a vertex of $\mathcal{I}_1$ and the other of $\mathcal{I}_2$,
        and due to the construction of $(H,\M,\ell)$, it holds that for every such demand, both its endpoints are neighbors of $s$.

        Next we show that, for $i \in [k]$, for every instance of $\hat{C}_i$
        it holds that the only vertices not appearing in paths of $\mathcal{P}_{\alpha \beta}$
        are those of $\hat{P}^{i,j}_{s(i)}$,
        for some function $s \colon [k] \to [n]$.
        Assume there exists a copy gadget $(\mathcal{I}_1, \mathcal{I}_2, t)$,
        where $\mathcal{I}_1$ and $\mathcal{I}_2$ are instances of $\hat{C}_i$,
        $t \in \braces{1,2}$ and $s$ is the corresponding copy vertex.
        Moreover, let $P \in \mathcal{P}$ such that $s \in P$.
        Then, there exists $s(i) \in [n]$ such that the vertices of $\hat{P}^i_{s(i)}$ are not used to route demands in $\M_{\alpha \beta}$,
        for both $\mathcal{I}_1$ and $\mathcal{I}_2$,
        as otherwise no demand of $\M \setminus \M_{\alpha \beta}$ can be routed in the graph induced by $\mathcal{I}_1$, $\mathcal{I}_2$ and $s$.
        Furthermore, notice that for every $s_1, s_2 \in S$, $N(s_1) \neq N(s_2)$,
        while for every instance $\mathcal{I}' \neq \mathcal{I}$ of $\hat{C}_i$,
        there is a sequence of copy gadgets $(\mathcal{I}, \mathcal{I}_1,t_1), \ldots, (\mathcal{I}_p, \mathcal{I}', t_{p+1})$,
        where $\mathcal{I}$ denotes the original choice gadget $\hat{C}_i$ in $\hat{A}(1,k,1,k)$.

        Let $\mathcal{V} = \setdef{v^i_{s(i)}}{i \in [k]} \subseteq V(G)$,
        where $|\mathcal{V} \cap V_i| = 1$, for all $i \in [k]$.
        We will prove that $\mathcal{V}$ is a clique.
        Let $v^{i_1}_{s(i_1)}, v^{i_2}_{s(i_2)} \in \mathcal{V}$.
        Consider the adjacency gadget $\hat{A}(i_1,i_1,i_2,i_2)$ and let $P_i \in \mathcal{P}$ be the path of $\mathcal{P}$ that contains $m_{i_1, i_2}$.
        It holds that $P_i$ is comprised of $m_{i_1, i_2}$, as well as two of its neighbors, one in $\mathcal{I}_1$ of $\hat{C}_{i_1}$ and one in $\mathcal{I}_2$ of $\hat{C}_{i_2}$.
        Since the only neighbors of $m_{i_1,i_2}$ that are not used by paths in $\mathcal{P}_{\alpha \beta}$ are $v^{i_1,s(i_1)}_2$ and $v^{i_2,s(i_2)}_2$,
        we infer that $(v^{i_1,s(i_1)}_2, v^{i_2,s(i_2)}_2) \in \M$ and, consequently, $\braces{v^{i_1}_{s(i_1)}, v^{i_2}_{s(i_2)}} \in E^{i_1,i_2}$.
        Since this holds for any two vertices belonging to~$\mathcal{V}$, it follows that $G$ has a $k$-clique.
    \end{proof}

    Therefore, in polynomial time, we can construct a graph $H$,
    of tree-depth $\td = \bO(k)$ and $\fvs = \bO(k^2)$ due to \cref{lem:td_lb_td},
    as well as a set of pairs $\M$,
    such that, due to \cref{lem:td_lb_cor1,lem:td_lb_cor2},
    deciding whether at least $\ell$ pairs of $\M$ can be routed 
    is equivalent to deciding whether $G$ has a $k$-clique.
\end{proof}

\section{Conclusion}\label{sec:conclusion}
In this work we examine in depth \textsc{Maximum Node-Disjoint Paths}
under the perspective of parameterized complexity,
painting a complete picture regarding its tractability under most standard parameterizations.
We additionally employ the use of approximation and present various (in)approximability results,
further enhancing our understanding of the problem in the setting of parameterized approximation.

As a direction for future work,
we remark that although \cref{thm:inapx_pw} excludes the existence of an approximation scheme,
a constant-factor approximation algorithm running in time $f(\tw) n^{\bO(1)}$ remains possible.
Regarding the FPT algorithms of \cref{sec:tractability},
all but the one of \cref{thm:fes} rely on providing an upper bound on the number of vertices
involved in an optimal solution,
leading in some cases to even double-exponential running times.
The optimality of those running times is thus a natural open question.
Lastly, it is unknown whether the problem is FPT parameterized by the cutwidth or the cluster vertex deletion number of the input graph.

\bibliography{bibliography}

\begin{thebibliography}{10}

\bibitem{jct/AdlerKKLST17}
Isolde Adler, Stavros~G. Kolliopoulos, Philipp~Klaus Krause, Daniel Lokshtanov,
  Saket Saurabh, and Dimitrios~M. Thilikos.
\newblock Irrelevant vertices for the planar disjoint paths problem.
\newblock {\em J. Comb. Theory, Ser. {B}}, 122:815--843, 2017.
\newblock \href {https://doi.org/10.1016/J.JCTB.2016.10.001}
  {\path{doi:10.1016/J.JCTB.2016.10.001}}.

\bibitem{jacm/AlonYZ95}
Noga Alon, Raphael Yuster, and Uri Zwick.
\newblock Color-coding.
\newblock {\em J. {ACM}}, 42(4):844--856, 1995.
\newblock \href {https://doi.org/10.1145/210332.210337}
  {\path{doi:10.1145/210332.210337}}.

\bibitem{iandc/BodlaenderGNS24}
Hans~L. Bodlaender, Carla Groenland, Jesper Nederlof, and C{\'{e}}line M.~F.
  Swennenhuis.
\newblock Parameterized problems complete for nondeterministic {FPT} time and
  logarithmic space.
\newblock {\em Inf. Comput.}, 300:105195, 2024.
\newblock \href {https://doi.org/10.1016/J.IC.2024.105195}
  {\path{doi:10.1016/J.IC.2024.105195}}.

\bibitem{iwpec/ChaudharyG0Z23}
Juhi Chaudhary, Harmender Gahlawat, Michal Wlodarczyk, and Meirav Zehavi.
\newblock Kernels for the disjoint paths problem on subclasses of chordal
  graphs.
\newblock In {\em 18th International Symposium on Parameterized and Exact
  Computation, {IPEC} 2023}, volume 285 of {\em LIPIcs}, pages 10:1--10:22.
  Schloss Dagstuhl - Leibniz-Zentrum f{\"{u}}r Informatik, 2023.
\newblock \href {https://doi.org/10.4230/LIPICS.IPEC.2023.10}
  {\path{doi:10.4230/LIPICS.IPEC.2023.10}}.

\bibitem{soda/Cho0O23}
Kyungjin Cho, Eunjin Oh, and Seunghyeok Oh.
\newblock Parameterized algorithm for the disjoint path problem on planar
  graphs: Exponential in $k^2$ and linear in $n$.
\newblock In {\em Proceedings of the 2023 {ACM-SIAM} Symposium on Discrete
  Algorithms, {SODA} 2023}, pages 3734--3758. {SIAM}, 2023.
\newblock \href {https://doi.org/10.1137/1.9781611977554.ch144}
  {\path{doi:10.1137/1.9781611977554.ch144}}.

\bibitem{approx/ChuzhoyK15}
Julia Chuzhoy and David H.~K. Kim.
\newblock On approximating node-disjoint paths in grids.
\newblock In {\em Approximation, Randomization, and Combinatorial Optimization.
  Algorithms and Techniques, {APPROX/RANDOM} 2015}, volume~40 of {\em LIPIcs},
  pages 187--211. Schloss Dagstuhl - Leibniz-Zentrum f{\"{u}}r Informatik,
  2015.
\newblock \href {https://doi.org/10.4230/LIPIcs.APPROX-RANDOM.2015.187}
  {\path{doi:10.4230/LIPIcs.APPROX-RANDOM.2015.187}}.

\bibitem{stoc/ChuzhoyKL16}
Julia Chuzhoy, David H.~K. Kim, and Shi Li.
\newblock Improved approximation for node-disjoint paths in planar graphs.
\newblock In {\em Proceedings of the 48th Annual {ACM} {SIGACT} Symposium on
  Theory of Computing, {STOC} 2016}, pages 556--569. {ACM}, 2016.
\newblock \href {https://doi.org/10.1145/2897518.2897538}
  {\path{doi:10.1145/2897518.2897538}}.

\bibitem{icalp/ChuzhoyKN18}
Julia Chuzhoy, David H.~K. Kim, and Rachit Nimavat.
\newblock Improved approximation for node-disjoint paths in grids with sources
  on the boundary.
\newblock In {\em 45th International Colloquium on Automata, Languages, and
  Programming, {ICALP} 2018}, volume 107 of {\em LIPIcs}, pages 38:1--38:14.
  Schloss Dagstuhl - Leibniz-Zentrum f{\"{u}}r Informatik, 2018.
\newblock \href {https://doi.org/10.4230/LIPICS.ICALP.2018.38}
  {\path{doi:10.4230/LIPICS.ICALP.2018.38}}.

\bibitem{toc/ChuzhoyKN21}
Julia Chuzhoy, David H.~K. Kim, and Rachit Nimavat.
\newblock Almost polynomial hardness of node-disjoint paths in grids.
\newblock {\em Theory Comput.}, 17:1--57, 2021.
\newblock \href {https://doi.org/10.4086/toc.2021.v017a006}
  {\path{doi:10.4086/toc.2021.v017a006}}.

\bibitem{siamcomp/ChuzhoyKN22}
Julia Chuzhoy, David H.~K. Kim, and Rachit Nimavat.
\newblock New hardness results for routing on disjoint paths.
\newblock {\em {SIAM} J. Comput.}, 51(2):17--189, 2022.
\newblock \href {https://doi.org/10.1137/17m1146580}
  {\path{doi:10.1137/17m1146580}}.

\bibitem{books/CyganFKLMPPS15}
Marek Cygan, Fedor~V. Fomin, Lukasz Kowalik, Daniel Lokshtanov, D{\'{a}}niel
  Marx, Marcin Pilipczuk, Michal Pilipczuk, and Saket Saurabh.
\newblock {\em Parameterized Algorithms}.
\newblock Springer, 2015.
\newblock \href {https://doi.org/10.1007/978-3-319-21275-3}
  {\path{doi:10.1007/978-3-319-21275-3}}.

\bibitem{Diestel17}
Reinhard Diestel.
\newblock {\em Graph Theory}, volume 173 of {\em Graduate texts in
  mathematics}.
\newblock Springer, 2017.
\newblock \href {https://doi.org/10.1007/978-3-662-53622-3}
  {\path{doi:10.1007/978-3-662-53622-3}}.

\bibitem{algorithmica/DrangeDH16}
P{\aa}l~Gr{\o}n{\aa}s Drange, Markus~S. Dregi, and Pim van~'t Hof.
\newblock On the computational complexity of vertex integrity and component
  order connectivity.
\newblock {\em Algorithmica}, 76(4):1181--1202, 2016.
\newblock \href {https://doi.org/10.1007/S00453-016-0127-X}
  {\path{doi:10.1007/S00453-016-0127-X}}.

\bibitem{swat/EneMPR16}
Alina Ene, Matthias Mnich, Marcin Pilipczuk, and Andrej Risteski.
\newblock On routing disjoint paths in bounded treewidth graphs.
\newblock In {\em 15th Scandinavian Symposium and Workshops on Algorithm
  Theory, {SWAT} 2016}, volume~53 of {\em LIPIcs}, pages 15:1--15:15. Schloss
  Dagstuhl - Leibniz-Zentrum f{\"{u}}r Informatik, 2016.
\newblock \href {https://doi.org/10.4230/LIPIcs.SWAT.2016.15}
  {\path{doi:10.4230/LIPIcs.SWAT.2016.15}}.

\bibitem{algorithms/FeldmannSLM20}
Andreas~Emil Feldmann, {Karthik {C. S.}}, Euiwoong Lee, and Pasin Manurangsi.
\newblock A survey on approximation in parameterized complexity: Hardness and
  algorithms.
\newblock {\em Algorithms}, 13(6):146, 2020.
\newblock \href {https://doi.org/10.3390/A13060146}
  {\path{doi:10.3390/A13060146}}.

\bibitem{mp/FleszarMS18}
Krzysztof Fleszar, Matthias Mnich, and Joachim Spoerhase.
\newblock New algorithms for maximum disjoint paths based on tree-likeness.
\newblock {\em Math. Program.}, 171(1-2):433--461, 2018.
\newblock \href {https://doi.org/10.1007/s10107-017-1199-3}
  {\path{doi:10.1007/s10107-017-1199-3}}.

\bibitem{stoc/GuruswamiLRS024}
Venkatesan Guruswami, Bingkai Lin, Xuandi Ren, Yican Sun, and Kewen Wu.
\newblock Parameterized inapproximability hypothesis under exponential time
  hypothesis.
\newblock In {\em Proceedings of the 56th Annual {ACM} Symposium on Theory of
  Computing, {STOC} 2024}, pages 24--35. {ACM}, 2024.
\newblock \href {https://doi.org/10.1145/3618260.3649771}
  {\path{doi:10.1145/3618260.3649771}}.

\bibitem{stoc/GuruswamiLRS025}
Venkatesan Guruswami, Bingkai Lin, Xuandi Ren, Yican Sun, and Kewen Wu.
\newblock Almost optimal time lower bound for approximating parameterized
  clique, csp, and more, under {ETH}.
\newblock In {\em Proceedings of the 57th Annual {ACM} Symposium on Theory of
  Computing, {STOC} 2025}, pages 2136--2144. {ACM}, 2025.
\newblock \href {https://doi.org/10.1145/3717823.3718130}
  {\path{doi:10.1145/3717823.3718130}}.

\bibitem{networks/Karp75}
Richard~M. Karp.
\newblock On the computational complexity of combinatorial problems.
\newblock {\em Networks}, 5(4):45--68, 1975.
\newblock \href {https://doi.org/10.1002/NET.1975.5.1.45}
  {\path{doi:10.1002/NET.1975.5.1.45}}.

\bibitem{jct/KawarabayashiKR12}
Ken{-}ichi Kawarabayashi, Yusuke Kobayashi, and Bruce~A. Reed.
\newblock The disjoint paths problem in quadratic time.
\newblock {\em J. Comb. Theory, Ser. {B}}, 102(2):424--435, 2012.
\newblock \href {https://doi.org/10.1016/j.jctb.2011.07.004}
  {\path{doi:10.1016/j.jctb.2011.07.004}}.

\bibitem{mp/KolliopoulosS04}
Stavros~G. Kolliopoulos and Clifford Stein.
\newblock Approximating disjoint-path problems using packing integer programs.
\newblock {\em Math. Program.}, 99(1):63--87, 2004.
\newblock \href {https://doi.org/10.1007/s10107-002-0370-6}
  {\path{doi:10.1007/s10107-002-0370-6}}.

\bibitem{focs/KorhonenPS24}
Tuukka Korhonen, Michal Pilipczuk, and Giannos Stamoulis.
\newblock Minor containment and disjoint paths in almost-linear time.
\newblock In {\em 65th {IEEE} Annual Symposium on Foundations of Computer
  Science, {FOCS} 2024}, pages 53--61. {IEEE}, 2024.
\newblock \href {https://doi.org/10.1109/FOCS61266.2024.00014}
  {\path{doi:10.1109/FOCS61266.2024.00014}}.

\bibitem{toct/LampisV24}
Michael Lampis and Manolis Vasilakis.
\newblock Structural parameterizations for two bounded degree problems
  revisited.
\newblock {\em {ACM} Trans. Comput. Theory}, 16(3):17:1--17:51, 2024.
\newblock \href {https://doi.org/10.1145/3665156} {\path{doi:10.1145/3665156}}.

\bibitem{siamcomp/LokshtanovMS18}
Daniel Lokshtanov, D{\'{a}}niel Marx, and Saket Saurabh.
\newblock Slightly superexponential parameterized problems.
\newblock {\em {SIAM} J. Comput.}, 47(3):675--702, 2018.
\newblock \href {https://doi.org/10.1137/16M1104834}
  {\path{doi:10.1137/16M1104834}}.

\bibitem{siamcomp/LokshtanovMP0Z25}
Daniel Lokshtanov, Pranabendu Misra, Michal Pilipczuk, Saket Saurabh, and
  Meirav Zehavi.
\newblock An exponential time parameterized algorithm for planar disjoint
  paths.
\newblock {\em {SIAM} J. Comput.}, 54(2):321--418, 2025.
\newblock \href {https://doi.org/10.1137/20M1355902}
  {\path{doi:10.1137/20M1355902}}.

\bibitem{soda/LokshtanovR0Z20}
Daniel Lokshtanov, M.~S. Ramanujan, Saket Saurabh, and Meirav Zehavi.
\newblock Parameterized complexity and approximability of directed odd cycle
  transversal.
\newblock In {\em Proceedings of the 2020 {ACM-SIAM} Symposium on Discrete
  Algorithms, {SODA} 2020}, pages 2181--2200. {SIAM}, 2020.
\newblock \href {https://doi.org/10.1137/1.9781611975994.134}
  {\path{doi:10.1137/1.9781611975994.134}}.

\bibitem{birthday/Lokshtanov0Z20}
Daniel Lokshtanov, Saket Saurabh, and Meirav Zehavi.
\newblock Efficient graph minors theory and parameterized algorithms for
  (planar) disjoint paths.
\newblock In {\em Treewidth, Kernels, and Algorithms - Essays Dedicated to Hans
  L. Bodlaender on the Occasion of His 60th Birthday}, volume 12160 of {\em
  Lecture Notes in Computer Science}, pages 112--128. Springer, 2020.
\newblock \href {https://doi.org/10.1007/978-3-030-42071-0_9}
  {\path{doi:10.1007/978-3-030-42071-0_9}}.

\bibitem{soda/MarxW15}
D{\'{a}}niel Marx and Paul Wollan.
\newblock An exact characterization of tractable demand patterns for maximum
  disjoint path problems.
\newblock In {\em Proceedings of the Twenty-Sixth Annual {ACM-SIAM} Symposium
  on Discrete Algorithms, {SODA} 2015}, pages 642--661. {SIAM}, 2015.
\newblock \href {https://doi.org/10.1137/1.9781611973730.44}
  {\path{doi:10.1137/1.9781611973730.44}}.

\bibitem{esa/NadaraPS22}
Wojciech Nadara, Michal Pilipczuk, and Marcin Smulewicz.
\newblock Computing treedepth in polynomial space and linear {FPT} time.
\newblock In {\em 30th Annual European Symposium on Algorithms, {ESA} 2022},
  volume 244 of {\em LIPIcs}, pages 79:1--79:14. Schloss Dagstuhl -
  Leibniz-Zentrum f{\"{u}}r Informatik, 2022.
\newblock \href {https://doi.org/10.4230/LIPICS.ESA.2022.79}
  {\path{doi:10.4230/LIPICS.ESA.2022.79}}.

\bibitem{icalp/ReidlRVS14}
Felix Reidl, Peter Rossmanith, Fernando~S{\'{a}}nchez Villaamil, and Somnath
  Sikdar.
\newblock A faster parameterized algorithm for treedepth.
\newblock In {\em Automata, Languages, and Programming - 41st International
  Colloquium, {ICALP} 2014}, volume 8572 of {\em Lecture Notes in Computer
  Science}, pages 931--942. Springer, 2014.
\newblock \href {https://doi.org/10.1007/978-3-662-43948-7_77}
  {\path{doi:10.1007/978-3-662-43948-7_77}}.

\bibitem{jct/RobertsonS95b}
Neil Robertson and Paul~D. Seymour.
\newblock Graph minors .{XIII}. {The} disjoint paths problem.
\newblock {\em J. Comb. Theory, Ser. {B}}, 63(1):65--110, 1995.
\newblock \href {https://doi.org/10.1006/jctb.1995.1006}
  {\path{doi:10.1006/jctb.1995.1006}}.

\bibitem{scheffler1994practical}
Petra Scheffler.
\newblock {\em A practical linear time algorithm for disjoint paths in graphs
  with bounded tree-width}.
\newblock TU, Fachbereich 3, 1994.

\bibitem{mst/Tsur21}
Dekel Tsur.
\newblock Faster parameterized algorithm for cluster vertex deletion.
\newblock {\em Theory Comput. Syst.}, 65(2):323--343, 2021.
\newblock \href {https://doi.org/10.1007/S00224-020-10005-W}
  {\path{doi:10.1007/S00224-020-10005-W}}.

\bibitem{focs/WlodarczykZ23}
Michal Wlodarczyk and Meirav Zehavi.
\newblock Planar disjoint paths, treewidth, and kernels.
\newblock In {\em 64th {IEEE} Annual Symposium on Foundations of Computer
  Science, {FOCS} 2023}, pages 649--662. {IEEE}, 2023.
\newblock \href {https://doi.org/10.1109/FOCS57990.2023.00044}
  {\path{doi:10.1109/FOCS57990.2023.00044}}.

\end{thebibliography}


\end{document}